\theoremstyle{plain}
\newtheorem{theorem}{Theorem}[section]
\newtheorem{proposition}[theorem]{Proposition}
\newtheorem{lemma}[theorem]{Lemma}
\newtheorem{corollary}[theorem]{Corollary}
\newtheorem{remark}[theorem]{Remark}
\begin{document}


\date{}

\title
{\Large \sc Embedded Eigenvalues and Neumann-Wigner Potentials for Relativistic Schr\"odinger Operators}
\vspace{1.5cm}
\author{
\small J\'ozsef L\H{o}rinczi 
\\[0.1cm]
 {\small\it  Department of Mathematical Sciences, Loughborough University  }    \\[-0.4ex]
  {\small\it Loughborough LE11 3TU, United Kingdom}      \\[-0.4ex]
 {\small  {\tt J.Lorinczi@lboro.ac.uk}   }\\[0.5cm]
\small Itaru Sasaki \\[0.1cm]
{\it \small Department of Mathematics, Shinshu University} \\[-0.4ex]
{\it \small Matsumoto, 3908621, Japan} \\[-0.4ex]
{\small {\tt  isasaki@shinshu-u.ac.jp}} \\[-0.4ex]}

\maketitle

\bigskip

\begin{abstract}
\noindent
The existence of potentials for relativistic Schr\"odinger operators allowing eigenvalues embedded in the essential
spectrum is a long-standing open problem.
We construct Neumann-Wigner type potentials for the massive relativistic Schr\"odinger operator in one and three
dimensions for which an embedded eigenvalue exists.
We show that in the non-relativistic limit these potentials converge to the classical Neumann-Wigner and Moses-Tuan
potentials, respectively. For the massless operator in one dimension we construct two families of potentials,
different by the parities of the (generalized) eigenfunctions, for which an eigenvalue equal to zero or a
zero-resonance exists, dependent on the rate of decay of the corresponding eigenfunctions. We obtain explicit
formulae and observe unusual decay behaviours due to the non-locality of the operator.

\medskip
\noindent
\emph{Key-words}: relativistic Schr\"odinger operator, non-local operator, Neumann-Wigner potentials, embedded
eigenvalues, resonances

\medskip
\noindent
2010 MS Classification: primary 47A75, 47G30; secondary 34L40, 47A40, 81Q10
\end{abstract}

\medskip

\makeatletter
\renewcommand\@dotsep{10000}
\makeatother



\newpage
\section{Introduction}
Non-local Schr\"odinger operators, and related random processes with jump discontinuities as well as integro-differential equations,
attract increasing attention in modern functional analysis and probability. In particular, recently much effort has been made to 
find explicit solutions of non-local eigenvalue problems, or derive precise estimates on eigenfunctions and other spectral properties. 
Relativistic Schr\"odinger operators are one specific case of non-local operators, and display a number of interesting properties which 
differ qualitatively from their non-relativistic counterparts. The study of non-local operators also offers a new view of the results 
established for differential operators such as the Laplacian. 

In the theory of classical Schr\"odinger operators $H = -\Delta + V$, with the Laplacian $\Delta$ and potential $V$, a
remarkable result says that eigenvalues embedded in the absolutely continuous spectrum may occur for carefully chosen potentials
\cite{RS3,EK}. A first example has been proposed by von Neumann and Wigner in the early days of quantum mechanics \cite{vNW29},
constructing an oscillating potential for which the reflected wave and the transmitted wave combine through tunneling to a finite
wave-function at eigenvalue equal to 1 in appropriately chosen units. This is a rotationally symmetric potential on $\RR^3$ given
by
\begin{equation}
\label{NW}
V_{\rm NW}(x) = -32 \frac{\sin|x| \left( g(|x|)^3 \cos |x| - 3g(|x|)^2\sin^3|x| + g(|x|)\cos |x| + \sin^3|x| \right)}
{(1 + g(|x|)^2)^2},
\end{equation}
where $g(|x|) = 2|x| - \sin 2|x|$, and the corresponding eigenfunction is
\begin{equation}
\label{NWef}
u_{\rm NW}(x) = \frac{\sin|x|}{|x|(1 + g(|x|)^2)}.
\end{equation}
Bound states corresponding to positive eigenvalues have been realized also experimentally \cite{C}.

Since this initial example, Neumann-Wigner type potentials have attracted much attention
\cite{S69,SH,HL,MU,BAD,DMR,HKS,B94,AU,L13,L14,S}.
In particular, the set of embedded eigenvalues is not necessarily a small set, Simon has shown that examples can be constructed for
which there is a dense set of positive eigenvalues \cite{S97}, see also \cite{N86, NTT, RUU16}. In spite of this, the possibility of
existence of embedded eigenvalues is a delicate problem. On the one hand, a fundamental result by Kato shows that if $V(x) = o(1/|x|)$,
then no embedded eigenvalues exist \cite{K59}; see related results in \cite{FH82,CFKS,KT} and the references therein. Since
$$
V_{\rm NW}(x) \simeq -\frac{8\sin2|x|}{|x|} + O(1/|x|^2), \quad \mbox{as} \; |x| \to \infty,
$$
clearly there is only a narrow margin separating potentials for which embedded eigenvalues exist from potentials for which they can be
ruled out. On the other hand, even if an embedded eigenvalue does exist, it is very unstable to perturbations \cite{RS3,AHS}.

The existence of embedded eigenvalues and the construction of appropriate potentials for relativistic Schr\"odinger operators has been a
long-standing open problem.
In this paper we consider this problem for the Hamiltonian
\begin{equation}
\label{relSch}
H = (-\Delta+m^2)^{1/2} -m + V
\end{equation}
on $L^2(\RR^d)$, with rest mass $m \geq 0$. The spectral properties of this operator and its variants have been much studied, see
e.g. \cite{W74,H77,CMS,HIL12b,HIL15}.
The main difficulty in comparison with the non-relativistic case is the following. In the classical cases the idea underlying the
construction is to rewrite the eigenvalue equation and seek a suitable potential $V = \lambda + \frac{\Delta u}{u}$,
 where $\lambda$ is an eigenvalue and $u$ is a corresponding eigenfunction.
In order the potential $V$ to be non-singular, the zeroes of $u$ need to be matched with the zeroes of $\Delta u$.
Since in this case a differential operator is involved, the problem can be analyzed by
PDE techniques. When, however, the operator $(-\Delta+m^2)^{1/2}$ is used instead of the Laplacian, one has to cope with the difficulty
of controlling the zeroes of functions transformed under a non-local (pseudo-differential) operator. There are presently no general
mathematical tools for this, and a functional calculus even for the fractional Laplacian $(-\Delta)^{\alpha/2}$, $0 < \alpha < 2$,
is only in the making. In the present paper
we develop a technique for dealing with this, and obtain explicit formulae.

Our results are as follows.
Assuming $m > 0$, in Section 2.1 we construct Neumann-Wigner type potentials in one and three dimensions for which the relativistic
Schr\"odinger operator (\ref{relSch}) has a positive eigenvalue equal to $\sqrt{1+m^2}-m$. These potentials are smooth and decay at
infinity like $O(1/|x|)$.
In Section 2.2 we show that in the non-relativistic limit, the potentials obtained in Section 2.1 converge uniformly to the classical
Neumann-Wigner potentials in the $C^2$-norm, thus our result can be seen as a genuine relativistic counterpart of the original
Hamiltonian. In Section 2.3 we construct a second example of a potential for which a positive eigenvalue exists, and whose classical
variant is due to Moses and Tuan \cite{MT59}. This potential has a less regular behaviour than the relativistic Neumann-Wigner potential
and needs a more delicate treatment.

Next we consider the massless case $m = 0$ of the operator (\ref{relSch}). In this case it is known that for $d=3$ the conditions that
$|V|$, $|x\cdot\nabla V|$ and $|x\cdot\nabla (x\cdot\nabla V)|$ are bounded by $C(1+x^2)^{-1/2}$, with a small $C > 0$, jointly imply
that $H$ has no non-negative eigenvalue \cite{RU12}; see also \cite{LS}.
In Section 2.4 we construct a family of potentials $V_\nu$ in one dimension, for which the massless relativistic Schr\"odinger operator
has an eigenvalue equal to zero corresponding to an even eigenfunction, and a family $\tilde V_\nu$ for which there is a zero eigenvalue
corresponding to an odd eigenfunction. In either case the eigenvalues become resonances if $\nu$ is small enough. Our formulae are explicit,
involving hypergeometric functions. The potentials show unexpected behaviour on changing the parameter $\nu$ (the decay exponent of the
related eigenfunctions), specifically, in the family $\tilde V_\nu$ only the case $\nu = 1$ is short-range. Another feature of
these results is unusual decay rates of the eigenfunctions.
By the results obtained in \cite{KL15a,KL15b}, it would follow for the massless operator with a decaying potential having \emph{negative}
eigenvalues that the corresponding eigenfunctions decay at a rate $1/|x|^2$, while in our cases they decay like $1/|x|$ or slower. Since
for such potentials the fall-off of eigenfunctions depends on the distance of the eigenvalue from the edge of the continuous spectrum, it
is interesting to see that when this distance drops to zero, the decay of eigenfunctions goes through a regime change slowing them down,
and we are able to determine the precise rate. Our examples for the massive and massless operators also complement the explicit formulae
recently obtained in developing a calculus for the fractional Laplace operator \cite{DKK15}.

\section{Existence of positive eigenvalues}

\subsection{Neumann-Wigner type potential for relativistic Schr\"odinger operators}
We consider the relativistic Schr\"odinger operator on $L^2(\RR)$ as given by (\ref{relSch}), and assume $m>0$.
If $V$ decays at infinity, the essential spectrum of this operator is $[0,\infty)$.
Denote $p=-id/dx$ and define the following functions:
\begin{align}
   & g(x) := 2x - \sin(2x),  \qquad h(x) := \frac{1}{1+g(x)^2} \nonumber\\
   & f(x) := \left( \sqrt{(p+1)^2+m^2} + \sqrt{(p-1)^2 + m^2} \right) h(x) \nonumber
\end{align}
and
\begin{align}
   & u(x) := f(x) \sin x \label{u} \\
   & V(x) := \lambda - \frac{1}{u(x)}\left(\sqrt{p^2+m^2}-m\right) u(x), \label{V}
\end{align}
where $\lambda := \sqrt{1+m^2}-m > 0$.

\begin{theorem}{\label{thm1}}
Let $H$ be given by (\ref{relSch}) and $V$ by (\ref{V}). If $m \geq 146$, then $V$ is a real-valued smooth
 potential with the property that $V(x) = O(1/|x|)$, and $\lambda$ and $u$ satisfy the eigenvalue equation
\begin{align}
    H u = \lambda u, \qquad u \in D(H). \label{eveq}
\end{align}
\end{theorem}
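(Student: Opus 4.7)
The plan is to exploit the intertwining identity $e^{\mp ix}\sqrt{p^{2}+m^{2}}\,e^{\pm ix}=\sqrt{(p\pm 1)^{2}+m^{2}}$, valid for any Fourier multiplier because conjugation by $e^{\pm ix}$ shifts the momentum variable by $\mp 1$, and thereby to reduce the non-local expression $\sqrt{p^{2}+m^{2}}\,u$ to an explicit pointwise formula. Writing $u=(e^{ix}-e^{-ix})f/(2i)$ and setting $L_{\pm}:=\sqrt{(p\pm 1)^{2}+m^{2}}$, the identity gives
\[
\sqrt{p^{2}+m^{2}}\,u=\frac{1}{2i}\bigl[e^{ix}L_{+}f-e^{-ix}L_{-}f\bigr].
\]
Since $L_{+}$ and $L_{-}$ commute as Fourier multipliers and $f=(L_{+}+L_{-})h$, one obtains $(L_{+}-L_{-})f=(L_{+}^{2}-L_{-}^{2})h=4ph=-4ih'$ and $(L_{+}+L_{-})f=2[p^{2}+(1+m^{2})+B]h$ with $B:=L_{+}L_{-}$. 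Averaging the two expressions and using $\sin x=(e^{ix}-e^{-ix})/(2i)$, $\cos x=(e^{ix}+e^{-ix})/2$, I obtain the pointwise identity
\[
\sqrt{p^{2}+m^{2}}\,u(x)=\bigl[-h''(x)+(1+m^{2})h(x)+(Bh)(x)\bigr]\sin x-2h'(x)\cos x.
\]
A direct calculation yields $g'(x)=4\sin^{2}x$ and hence $h'(x)=-8g(x)\sin^{2}x/(1+g(x)^{2})^{2}$, so $h'(x)\cos x/\sin x$ extends smoothly across the zeros of $\sin x$. Dividing by $u=f\sin x$ therefore produces an expression for $V$ that is real-valued and $C^{\infty}$ on every open set on which $f\neq 0$, and $Hu=\lambda u$ holds tautologically from the definition of $V$. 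The proof thus reduces to three points: (i) $f(x)\neq 0$ for every $x\in\RR$, (ii) $V(x)=O(1/|x|)$ at infinity, and (iii) $u\in D(H)$.

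Point (i) is the analytic heart of the theorem and is where the hypothesis $m\geq 146$ enters. I would write $f=2mh+Ah$ with $A:=L_{+}+L_{-}-2m$, whose non-negative Fourier symbol $\sqrt{(k+1)^{2}+m^{2}}+\sqrt{(k-1)^{2}+m^{2}}-2m$ grows like $2|k|$ at infinity. Combining the L\'evy--Khintchine representation of $\sqrt{p^{2}+m^{2}}-m$ with relativistic L\'evy density $j(z)=mK_{1}(m|z|)/(\pi|z|)$ and the intertwining identity, one obtains
\[
(Ah)(x)=\frac{2m}{\pi}\int_{\RR}\frac{K_{1}(m|x-y|)}{|x-y|}\bigl[h(x)-\cos(x-y)\,h(y)\bigr]\,dy,
\]
to be understood as a principal value near $y=x$. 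Splitting the bracket as $h(x)[1-\cos(x-y)]+\cos(x-y)[h(x)-h(y)]$, using the uniform pointwise bounds on $h$, $|h'|$, $|h''|$ that follow from the explicit formulas for $g$ and $g'$, and invoking the standard moments $\int_{0}^{\infty}r^{s-1}K_{1}(mr)\,dr$, the task reduces to the scalar pointwise inequality $|(Ah)(x)|<2m\,h(x)$ for every $x$. The specific threshold $m=146$ presumably arises as the smallest integer at which this estimate goes through. This is the step that cannot be bypassed by soft functional-analytic arguments, because it demands honest control over the location of zeros of the pseudo-differential image of $h$.

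Granted $f>0$, the remaining points proceed fairly cleanly. For (ii), the asymptotics $g(x)\sim 2x$, $h(x)\sim 1/(4x^{2})$ and $h''(x)=O(1/x^{4})$, combined with the Bessel-kernel representation of $Bh$, show that $f(x)$ and the non-oscillatory part $h''(x)-(1+m^{2})h(x)-(Bh)(x)$ both decay like $1/x^{2}$; the leading constants cancel using $\lambda+m-\sqrt{1+m^{2}}=0$, and the oscillatory remainder $-16g(x)\sin x\cos x/[f(x)(1+g(x)^{2})^{2}]$ supplies the claimed $O(1/|x|)$ tail, reproducing $-8\sin(2|x|)/|x|$ in the non-relativistic limit. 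For (iii), smoothness of $h$ and polynomial decay $h=O(1/x^{2})$ place $h$ in $H^{k}(\RR)$ for every $k$; since the symbol of $L_{+}+L_{-}$ grows only linearly at infinity, $f\in H^{k-1}$ and in particular $f\in H^{1}$, so $u=f\sin x\in H^{1}(\RR)=D(H_{0})$; boundedness of $V$ finally gives $D(H)=D(H_{0})$, so $u\in D(H)$ and $Hu=\lambda u$ holds in the strong sense.
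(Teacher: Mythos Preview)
Your intertwining computation and the pointwise identity
\[
\sqrt{p^{2}+m^{2}}\,u=\bigl[-h''+(1+m^{2})h+Bh\bigr]\sin x-2h'\cos x
\]
are correct and equivalent to the paper's key lemma, which records the asymmetric form $\omega(p)u=\lambda u+\sin x\,(\omega(p+1)-\lambda)f-2e^{-ix}h'$. You have also correctly reduced the theorem to (i) $f>0$ pointwise and (ii) the decay of the non-oscillatory part of $V$. The gap is that neither estimate is actually proved in your proposal, and for both the paper uses mechanisms different from the ones you sketch.

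For (i) you split $f=2mh+Ah$ with $A=L_{+}+L_{-}-2m$ and aim at $|Ah|<2mh$ via the L\'evy--Khintchine kernel, conceding that the constant $146$ is only ``presumably'' recovered. The paper instead splits $f=2\omega_{0}(p)h+T(p)h$ with $T(k)=\omega_{+}(k)+\omega_{-}(k)-2\omega_{0}(k)$. The decisive observation, absent from your argument, is that $\widehat{T}(x)=2\widehat{\omega_{0}}(x)(\cos x-1)\ge 0$ because $\widehat{\omega_{0}}(x)=-\sqrt{2/\pi}\,m^{2}\int_{0}^{\infty}e^{-m|x|\cosh t}\sinh^{2}t\,dt\le 0$; thus $T(p)$ is positivity preserving and $T(p)h\ge 0$. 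This throws the entire burden onto the single estimate $\omega_{0}(p)h\ge\langle x\rangle^{-2}$, which the paper gets from $\omega_{0}=m^{2}\omega_{0}^{-1}+\omega_{0}^{-1}p^{2}$ together with explicit $K_{0}$-kernel bounds yielding $\omega_{0}^{-1}h\ge(25m)^{-1}\langle x\rangle^{-2}$ and $|\omega_{0}^{-1}h''|\le 700\,m^{-1}\langle x\rangle^{-3}$. The threshold $m\ge 146$ is precisely where $m/25-700/m\ge 1$; your two-sided bound $|Ah|<2mh$ is a different (and not obviously easier) inequality, and would not produce this constant.

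For (ii) your claim that ``the leading constants cancel'' is correct in outcome but unsupported. Unpacking your formula, the non-oscillatory part of $V$ equals $-\tfrac{1}{2f}(L_{+}+L_{-})\bigl[(L_{+}-\lambda_{0})+(L_{-}-\lambda_{0})\bigr]h$, so the issue is exactly $(L_{\pm}-\lambda_{0})f=O(\langle x\rangle^{-3})$. The paper proves this via the algebraic identity $(L_{\pm}-\lambda_{0})=(L_{\pm}+\lambda_{0})^{-1}(p^{2}\pm 2p)$ together with the positivity-preserving property of $\omega_{0}^{-1}$, which lets one iterate and reduce everything to the elementary bounds $|h^{(n)}(x)|\le C_{n}\langle x\rangle^{-3}$. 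Invoking ``the Bessel-kernel representation of $Bh$'' and asymptotics of $h$ does not by itself exhibit this extra order of decay; some device equivalent to the paper's iteration is needed.
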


\begin{remark}
\hspace{100cm}
\begin{enumerate}
{\rm
\item[(1)]
The restriction $m\geq 146$ is inessential in the sense that by scaling a similar result applies for all $m > 0$. For
$a>0$ let $(U_a g)(x) = a^{1/2}g(ax)$. Then $H$ is unitary equivalent to
\begin{align}
   \label{scaling_H}
   U_a H U_a^{-1} = \frac{1}{a} \left(\sqrt{p^2+(am)^2}-am + aV(ax) \right).
\end{align}
By using Theorem \ref{thm1} we can construct a smooth decaying potential $V$ such that \eqref{scaling_H} has a positive
eigenvalue for any $a$ with $am>146$.
\item[(2)]
While it is clear that $u$ and $V$ satisfy the eigenvalue equation \eqref{eveq}, a main difficulty is that since in \eqref{V}
the denominator $u$ has zeroes in $x = n\pi$, $n\in\NN$, the numerator should vanish at the same points in order $V$ to be
continuous. However, in the numerator we have $u$ under the non-local operator $(p^2+m^2)^{1/2}$ and in general there is no
straightforward way to control the zeroes of such functions. This problem is solved by Theorem \ref{thm1} in the present setting,
and we show that $V$ is well-defined and smooth.
 }
 \end{enumerate}
\end{remark}

We can use this basic result to derive a result in three dimensions.
\begin{corollary}{\label{thm3d}}
Let $m\geq 146$, write $W(x) = V(|x|), ~ x\in\RR^3$, and define
\begin{align}
 H_{\rm r} = \sqrt{-\Delta+m^2}-m+W(x),
\end{align}
acting on $L^2(\RR^3)$. Then
$$
v(x) = \frac{u(|x|)}{\sqrt{4\pi}|x|}
$$
is in $D(H_{\rm r})$ and satisfies the eigenvalue equation $ H_{\rm r} v = \lambda v$ with the same eigenvalue
$\lambda = \sqrt{1+m^2}-m$.
\end{corollary}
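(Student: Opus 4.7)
The plan is to reduce the three-dimensional problem to the one-dimensional result of Theorem \ref{thm1} via the classical correspondence between radial functions on $\mathbb{R}^3$ of the form $\psi(|x|)/|x|$ and odd functions $\psi$ on $\mathbb{R}$. The key identity I will establish is that the operator $\sqrt{-\Delta+m^2}$ on $\mathbb{R}^3$, restricted to such radial profiles, pulls back to the one-dimensional operator $\sqrt{p^2+m^2}$ acting on $\psi$. With this in hand, the eigenvalue equation $H_{\rm r} v = \lambda v$ becomes an immediate consequence of $Hu = \lambda u$.

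First I would verify that $u$ is odd, so that $v$ extends across the origin. The function $h(x) = 1/(1+g(x)^2)$ is even because $g$ is odd, and the Fourier multiplier $\sqrt{(k+1)^2+m^2}+\sqrt{(k-1)^2+m^2}$ defining $f$ is even in $k$, so $f$ is even and $u = f\sin$ is odd. A change to spherical coordinates then gives $\|v\|_{L^2(\mathbb{R}^3)}^2 = \tfrac{1}{2}\|u\|_{L^2(\mathbb{R})}^2$, so $v \in L^2(\mathbb{R}^3)$.

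The main step is the reduction identity. For a Schwartz odd $\psi$, a direct spherical-coordinate computation of the 3D Fourier transform yields
$$\hat v(k) = \frac{4\pi}{|k|}\int_0^\infty \psi(r)\sin(|k|r)\,dr = \frac{2\pi i}{|k|}\hat\psi(|k|),$$
where $\hat\psi$ denotes the 1D Fourier transform. Multiplying by $\sqrt{|k|^2+m^2}$ and setting $\tilde\psi := \sqrt{p^2+m^2}\,\psi$ (which is again odd, its 1D Fourier transform being odd times even) gives
$$\sqrt{-\Delta+m^2}\,v(x) = \frac{\tilde\psi(|x|)}{|x|}.$$
This identity extends from Schwartz odd functions to odd $\psi \in D(\sqrt{p^2+m^2})$ by a density and continuity argument.

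Applying this with $\psi = u$, and noting that $W(x)v(x) = V(|x|)u(|x|)/(\sqrt{4\pi}\,|x|)$, we obtain
$$H_{\rm r} v(x) = \frac{1}{\sqrt{4\pi}\,|x|}\bigl[(\sqrt{p^2+m^2}-m)u(|x|) + V(|x|)u(|x|)\bigr] = \frac{(Hu)(|x|)}{\sqrt{4\pi}\,|x|} = \lambda v(x).$$
Membership $v \in D(H_{\rm r})$ follows because $\|\sqrt{-\Delta+m^2}\,v\|_{L^2(\mathbb{R}^3)}^2 = \tfrac{1}{2}\|\sqrt{p^2+m^2}\,u\|_{L^2(\mathbb{R})}^2 < \infty$ by the same reduction, while $V \in L^\infty(\mathbb{R})$ by Theorem \ref{thm1} implies $Wv \in L^2(\mathbb{R}^3)$. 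The one delicate point, and the main technical obstacle, is making the reduction identity rigorous at the level of operator domains rather than merely on Schwartz functions; this is handled using the density of Schwartz odd functions in the odd sector of $D(\sqrt{p^2+m^2})$ together with the parity preservation of $\sqrt{p^2+m^2}$ already observed above.
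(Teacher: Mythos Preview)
Your proof is correct and follows essentially the same route as the paper: both reduce the radial $3$D problem to the $1$D result via the standard correspondence $v(x)=u(|x|)/(\sqrt{4\pi}\,|x|)$ between radial $L^2(\RR^3)$ functions and odd $L^2(\RR)$ functions, under which $\sqrt{-\Delta+m^2}$ becomes $\sqrt{p^2+m^2}$. The only cosmetic difference is that the paper packages this as a unitary $U:L^2_{\rm r}(\RR^3)\to L^2(\RR^+)$ intertwining the radial Laplacian with the Dirichlet $-d^2/dr^2$ and then invokes functional calculus, whereas you verify the intertwining directly on the Fourier side; the content is the same.
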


\subsection{Non-relativistic limit}
Next we show that in the non-relativistic limit the potentials, eigenvalues and eigenfunctions constructed in the
previous section converge to the expressions obtained by von Neumann and Wigner. To show this, we restore the speed
of light $c>0$ as a parameter in the operator, while keep using a system of units in which Planck's constant is
$\hbar = 1$. Let
\begin{align}
\label{withc}
   f_c(x) &:= \frac{1}{2mc}\left( \sqrt{(p+1)^2+m^2c^2} + \sqrt{(p-1)^2+m^2c^2}\right)h(x), \\
   u_c(x) &:= f_c(x)\sin x, \\
   \lambda_c &:= c\left(\sqrt{1+m^2c^2}-mc\right), \\
   V_c(x) &:= \lambda_c - c\frac{ \left(\sqrt{p^2+m^2c^2}-mc\right)u_c(x)}{u_c(x)}.
\end{align}
Then we define the relativistic Hamiltonian with $c$ by
\begin{align}
\label{Hc}
   H_c := \sqrt{c^2p^2 + m^2c^4} - mc^2 + V_c(x).
\end{align}
By Theorem \ref{thm1} we see that the eigenvalue equation
   $H_c u_c = \lambda_c u_c$ 
holds for all $c>146/m$.

\begin{theorem}{\label{nonrellimit}}
For every fixed $m>0$ we have the following non-relativistic limit:
\begin{align}
   &\lim_{c\to \infty} u_c(x) = \sin(x) h(x) =: u_\infty(x),
   \qquad \text{uniformly in } C^2(\RR), \label{lim uc} \\
   &\lim_{c\to \infty} \lambda_c = \frac{1}{2m},  \label{lim lambdac}\\
   &\lim_{c\to \infty} V_c(x) = \frac{1}{2m}\Big(1 - \frac{p^2 u_\infty(x)}{u_\infty(x)}\Big), \qquad
   x \in \RR\setminus\pi\NN. \label{lim vc}
\end{align}
\end{theorem}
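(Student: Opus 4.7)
The plan is to reduce all three claims to a single Taylor expansion of the relativistic symbol. From $\sqrt{\xi^2+m^2c^2}=mc\sqrt{1+\xi^2/(m^2c^2)}$ one reads off, for each fixed $\xi\in\RR$,
\begin{equation*}
\frac{1}{mc}\sqrt{\xi^2+m^2c^2}=1+\frac{\xi^2}{2m^2c^2}+O(\xi^4/c^4),
\qquad
c\bigl(\sqrt{\xi^2+m^2c^2}-mc\bigr)=\frac{\xi^2}{2m}+O(\xi^4/c^2),
\end{equation*}
with remainders uniform on compact $\xi$-sets and bounded globally by $C\xi^4/c^2$ (using $\sqrt{1+t}-1-t/2=O(t^2)$ applied to $t=\xi^2/(mc)^2$). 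Because $h$ is smooth with $h^{(k)}(x)=O(|x|^{-k-2})$ at infinity, $\hat h$ decays faster than any polynomial, so the Fourier multipliers entering \eqref{withc} converge on $h$ and on its derivatives uniformly at rate $O(1/c^2)$ by dominated convergence applied to the Fourier inversion integral.

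First I would handle \eqref{lim uc}. The multiplier defining $f_c$ has symbol $\tfrac{1}{2mc}\bigl[\sqrt{(\xi+1)^2+m^2c^2}+\sqrt{(\xi-1)^2+m^2c^2}\bigr]$, and the expansion above yields $f_c=h+\tfrac{p^2+1}{2m^2c^2}h+O(c^{-4})$ uniformly in $x$, together with its first two $x$-derivatives, since Fourier multipliers commute with $\partial_x$ and the same argument applies to $h'$ and $h''$. Multiplying by $\sin x$ and using the Leibniz rule gives $u_c\to u_\infty$ in $C^2(\RR)$. The limit \eqref{lim lambdac} is immediate from the scalar expansion $\sqrt{1+m^2c^2}=mc+\tfrac{1}{2mc}+O(c^{-3})$.

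For \eqref{lim vc} I would treat the numerator and denominator of the fraction in $V_c$ separately. The second expansion above, applied to the sequence $u_c$ (which is uniformly bounded in $C^2$ by the previous step), gives $c(\sqrt{p^2+m^2c^2}-mc)u_c=\tfrac{p^2}{2m}u_c+O(c^{-2})$ uniformly in $x$; combined with $u_c\to u_\infty$ in $C^2$, the numerator converges uniformly to $\tfrac{p^2}{2m}u_\infty=-\tfrac{1}{2m}u_\infty''$. Since $u_\infty(x)\ne 0$ for $x\notin\pi\NN$, the denominator stays bounded away from zero in a neighborhood of any such $x$, so pointwise division and passage to the limit yield \eqref{lim vc}.

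The main obstacle is making the Fourier-multiplier remainders rigorous in the pointwise $C^2$-sense, uniformly in $x$. Concretely one needs bounds of the form $\bigl|\bigl((T_c-I)\partial^k h\bigr)(x)\bigr|\le C_k\,c^{-2}\|\,\xi^4\,\widehat{\partial^k h}\,\|_{L^1}$ for $k=0,1,2$, which follow from the Fourier inversion formula together with the global remainder estimate and the rapid decay of $\hat h$. Once this uniform estimate is secured, the three assertions drop out by elementary manipulations and the continuity of division by a nonvanishing function.
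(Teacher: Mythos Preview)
Your approach is essentially the paper's: Taylor-expand the relativistic symbol, bound the remainder by $C\xi^4/c^2$, and control everything via $\|\xi^N\hat h\|_{L^1}<\infty$ through Fourier inversion. Two corrections are in order. First, a minor one: the derivatives satisfy only $h^{(k)}(x)=O(|x|^{-3})$ for $k\ge1$ (not $O(|x|^{-k-2})$; already $h''$ contains the term $-2gg''h^2\sim x\cdot x^{-4}$), but this still gives $h\in\bigcap_n D(p^n)$ and hence $\xi^n\hat h\in L^1$ for all $n$, so your conclusion about rapid Fourier decay stands.

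Second, and this is the one genuine soft spot: in your treatment of \eqref{lim vc} you justify the expansion $c(\sqrt{p^2+m^2c^2}-mc)u_c=\tfrac{p^2}{2m}u_c+O(c^{-2})$ by saying $u_c$ is ``uniformly bounded in $C^2$.'' A $C^2$ bound gives no control on $\|\xi^4\hat u_c\|_{L^1}$, which is what your Fourier-inversion remainder estimate actually requires. The fix is structural, not analytic: from $\hat u_c(k)=\tfrac{1}{2i}\bigl(\hat f_c(k+1)-\hat f_c(k-1)\bigr)$ and $|\hat f_c(k)|\le\bigl(1+(k^2+1)/(m^2c^2)\bigr)|\hat h(k)|$ one gets $\|\xi^4\hat u_c\|_{L^1}\le C+C'/c^2$ directly from the decay of $\hat h$, and then your argument goes through. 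This is exactly how the paper handles the point, so once you replace the $C^2$ remark by this Fourier-side bound the two proofs coincide.
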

\noindent
In the three-dimensional case we retrieve the expressions (\ref{NW})-(\ref{NWef}). With a similar notation as in
(\ref{withc})-(\ref{Hc}) we obtain
\begin{corollary}
\label{coro3D}
For every fixed $m>0$ we have $\lim_{c\to \infty} \lambda_c = \frac{1}{2m}$, $\lim_{c\to \infty} v_c(x) =  u_{\rm NW}(x)$,
uniformly in $C^2(\RR^3)$, and $\lim_{c\to \infty} W_c(x) =    V_{\rm NW}(x)$, for all $x \in \RR^3$.
\end{corollary}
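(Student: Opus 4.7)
The plan is to reduce all three claims to the one-dimensional non-relativistic limit of Theorem \ref{nonrellimit}, using the radial construction from Corollary \ref{thm3d}: $v_c(x)=u_c(|x|)/(\sqrt{4\pi}\,|x|)$, $W_c(x)=V_c(|x|)$, with the same scalar $\lambda_c$. The eigenvalue convergence $\lambda_c\to 1/(2m)$ is then immediate from \eqref{lim lambdac}.

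For the eigenfunction in $C^2(\RR^3)$ I would first extract a parity decomposition of $u_c$. Since $g(x)=2x-\sin 2x$ is odd, $h$ is even; the pseudodifferential operator $\sqrt{(p+1)^2+m^2c^2}+\sqrt{(p-1)^2+m^2c^2}$ commutes with parity because parity interchanges its two summands, so $f_c$ is even and $u_c$ is odd. Consequently
\[
v_c(x) \;=\; \frac{1}{\sqrt{4\pi}}\, f_c(|x|)\, \frac{\sin|x|}{|x|}
\]
is a manifestly smooth radial function on $\RR^3$, and its pointwise limit $(4\pi)^{-1/2}\, h(|x|)\sin|x|/|x|$ matches $u_{\rm NW}$ of \eqref{NWef}. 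From \eqref{lim uc} I would extract $f_c\to h$ in $C^2(\RR)$, by dividing out $\sin$ on open sets avoiding its zeros and using the smoothness of $f_c,\,h$ across those zeros. This is transferred to $C^2(\RR^3)$ via the standard fact that, for an even $\phi\in C^2(\RR)$, the radial extension $\phi(|\cdot|)$ lies in $C^2(\RR^3)$ with norm controlled by $\|\phi\|_{C^2(\RR)}$: the apparent singularity at the origin in
\[
\partial_j\partial_i[\phi(|x|)] \;=\; \phi''(|x|)\,\frac{x_i x_j}{|x|^2}+\phi'(|x|)\Bigl(\frac{\delta_{ij}}{|x|}-\frac{x_i x_j}{|x|^3}\Bigr)
\]
is removed by $\phi'(0)=0$ and $\phi'(r)/r\to \phi''(0)$. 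Applied to $f_c-h$ together with the smooth bounded factor $\sin|x|/|x|$, this yields $v_c\to u_{\rm NW}$ uniformly in $C^2(\RR^3)$.

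For the potential, $W_c(x)=V_c(|x|)$, so \eqref{lim vc} delivers pointwise convergence to $V_\infty(|x|)$ at every $x\in\RR^3$ with $|x|$ away from the zeros of $u_\infty$. One checks by direct computation, using $-\Delta_{\RR^3}\bigl[u_\infty(|x|)/|x|\bigr]=-u_\infty''(|x|)/|x|$, that $V_\infty(|x|)=V_{\rm NW}(x)$. The remaining radial loci are handled by the continuity of both $W_c$ and $V_{\rm NW}$ on $\RR^3$ (the former by Theorem \ref{thm1}, the latter by inspection of \eqref{NW}) combined with the $C^2(\RR^3)$ eigenfunction convergence just established, which allows passing to the limit in the rearranged eigenvalue equation on a neighbourhood of any regular point and identifying the limits at the exceptional spheres by continuity.

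The main obstacle I expect is the $C^2(\RR^3)$ upgrade of the eigenfunction convergence: the naive formula $u_c(|x|)/|x|$ is singular at the origin, and only the parity of $f_c$ makes $v_c$ actually smooth, while only even $C^2$ radial profiles extend to $C^2$ functions on $\RR^3$. Once this structural point is secured and the transfer lemma applied, everything else reduces to bookkeeping on top of Theorem \ref{nonrellimit} together with a direct matching of explicit formulae against \eqref{NW}--\eqref{NWef}.
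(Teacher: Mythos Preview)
Your approach is correct and is essentially what the paper intends: it disposes of the corollary in one line (``The proof of Corollary \ref{coro3D} can be done similarly''), meaning the radial lift of Theorem \ref{nonrellimit} via the unitary reduction of Corollary \ref{thm3d}. Your explicit treatment of the parity of $f_c$ and the $C^2$ transfer lemma for even radial profiles supplies precisely the detail the paper omits.

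Two small points. First, you do not need to reverse-engineer $f_c\to h$ in $C^2(\RR)$ from \eqref{lim uc} by dividing out $\sin x$; this convergence is the \emph{first} thing established in the proof of Theorem \ref{nonrellimit} (via the Fourier estimate \eqref{x272}), before $u_c\to u_\infty$ is deduced from it. Citing that step directly avoids the awkward handling of the zeros of $\sin$. Second, your continuity argument for the exceptional spheres $|x|\in\pi\NN$ is not airtight as written: continuity of each $W_c$ and of $V_{\rm NW}$, together with pointwise convergence on a dense set, does not by itself force convergence at the remaining points. One clean way to close this is to work instead with the representation \eqref{V3}, where the denominator is $f_c$ (nowhere vanishing) rather than $u_c$, and pass to the limit term by term; alternatively, note that the one-dimensional statement \eqref{lim vc} itself excludes $\pi\NN$, so the phrase ``for all $x\in\RR^3$'' in the corollary should be read with the same exclusion.
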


\subsection{Moses-Tuan type potential}
In \cite{MT59}, Moses and Tuan presented another example of a potential and eigenfunction for which an eigenvalue equal to
1 occurs. Their observation is the following. Write
\begin{align*}
  & u_\mathrm{MT}(x) = \frac{\sin |x|}{|x|(1+g(|x|))}, \qquad x\in \RR^3 \\
  & V_\mathrm{MT}(x) = \frac{-32\sin |x| ((|x|+1/2)\cos |x| -\sin |x|)}{(1+g(|x|))^2}.
\end{align*}
Then $(-\Delta+V_\mathrm{MT}(x))u_\mathrm{MT}(x) = u_\mathrm{MT}(x)$ holds, for all $x\in\RR^3$.
In this section we construct the relativistic counterpart of this example.

Let
\begin{align}
   \tilde h(x) = \frac{1}{1+g(|x|)}, \quad x\in\RR,
\end{align}
and write $p=-id/dx$ as before. Define
\begin{align}
  & \tilde f(x) = \big( \sqrt{(p+1)^2+m^2}+\sqrt{(p-1)^2+m^2} \big) \tilde h(x), \label{def of fmt}\\
  & \tilde u(x) = \tilde f(x) \sin x,  \nonumber\\
  & \tilde V(x) = \lambda - \frac{1}{\tilde u(x)}(\sqrt{p^2+m^2}-m)\tilde f(x), \label{vmt} \\
  & \tilde H    = \sqrt{p^2+m^2}-m + \tilde V(x) \nonumber
\end{align}
where $\lambda=\sqrt{1+m^2}$. Since $\tilde h\in D(p^3) \subset L^2(\RR)$, $\tilde f$ in \eqref{def of fmt} is defined
as a function in $L^2(\RR)$.

\begin{theorem}{\label{ThmMT}}
   If $m>34$, then $\tilde V(x)$ is a continuous function with the property that
   $\tilde V(x)=O(1/|x|)$, and $\lambda$ and $\tilde u$ satisfy
   \begin{align}
      \tilde H \tilde u = \lambda \tilde u,
      \qquad \tilde u \in D(\tilde H).  \label{eemt}
   \end{align}
\end{theorem}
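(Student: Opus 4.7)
The strategy mirrors that of Theorem \ref{thm1}: derive an explicit formula for $(\sqrt{p^2+m^2}-m)\tilde u$, exhibit cancellation of the apparent singularities in the quotient defining $\tilde V$, and estimate the tail. The key tool is the shift identity $\sqrt{p^2+m^2}(e^{\pm ix}\phi)=e^{\pm ix}\sqrt{(p\pm 1)^2+m^2}\phi$. Setting $A_\pm:=\sqrt{(p\pm 1)^2+m^2}$ so that $\tilde f=(A_++A_-)\tilde h$, and using $\sin x=(e^{ix}-e^{-ix})/(2i)$, a direct computation gives
\[
(\sqrt{p^2+m^2}-m)\tilde u(x) = \sin x\cdot\bigl[A_+A_-\tilde h + (p^2+1+m^2-m(A_++A_-))\tilde h\bigr](x) - 2\cos x\cdot\tilde h'(x),
\]
from which $\tilde V$ and the eigenvalue equation are read off.

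Continuity of $\tilde V$ then reduces to cancellation of singularities at zeros of $\tilde u$. Since $\tilde h$ is even, so is $\tilde f$, hence the only possible singular set is $\{n\pi : n\in\mathbb{Z}\}$. At $n\pi$ with $n\neq 0$, $\tilde h'(n\pi)=0$ follows from $g'(x)=2(1-\cos 2x)$ vanishing at multiples of $\pi$, and matches the simple zero of $\sin x$. At $x=0$ the function $\tilde h$ is only $C^2$: the expansion $g(|x|)\sim \tfrac{4}{3}|x|^3$ yields $\tilde h(x)\sim 1-\tfrac{4}{3}|x|^3$ and $\tilde h'(x)\sim -4x|x|$, which is enough to kill the term $2\cos x\cdot\tilde h'/(\tilde f\sin x)$ and preserve continuity (though not higher smoothness). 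For these cancellations to rule out genuine singularities I must also prove $\tilde f(x)>0$ for every $x\in\RR$, so that the zeros of $\tilde u$ are exactly the points $n\pi$.

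The decay $\tilde V(x)=O(1/|x|)$ follows from $\tilde h(x)\sim 1/(2|x|)$ as $|x|\to\infty$, together with the fact that the non-local operators $A_\pm$ and $A_+A_-$ preserve this rate up to $O(1/x^2)$ corrections, to be verified by Fourier-analytic estimates on the explicit symbols. Both the numerator (minus the constant $\lambda+m$) and the denominator of $\tilde V$ are then of order $1/|x|$ away from $\{n\pi\}$, and the cancellation above handles small neighbourhoods of each $n\pi$. Since $\tilde h\in D(p^3)\subset L^2(\RR)$ gives $\tilde f\in L^2$ and hence $\tilde u\in L^2$, the boundedness and decay of $\tilde V$ imply $\tilde u\in D(\tilde H)$, and \eqref{eemt} holds by the construction of $\tilde V$.

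The main obstacle is the global positivity of $\tilde f$ on $\RR$, and it is precisely this that fixes the hypothesis $m>34$. The natural approach is to split $\tilde f=2\sqrt{1+m^2}\,\tilde h+R$, with $R=(A_++A_--2\sqrt{1+m^2})\tilde h$, so that positivity reduces to the quantitative bound $\|R\|_\infty<2\sqrt{1+m^2}\inf\tilde h$; establishing this requires delicate control of the multiplier $\sqrt{(\xi+1)^2+m^2}+\sqrt{(\xi-1)^2+m^2}-2\sqrt{1+m^2}$ applied to $\tilde h$. Compared with Theorem \ref{thm1} the task is genuinely more subtle: although the resulting numerical threshold on $m$ is milder, the weaker decay of $\tilde h$ (namely $1/|x|$ rather than $1/|x|^2$ for the Neumann-Wigner base $h=(1+g^2)^{-1}$) and its reduced $C^2$ regularity at the origin are what enforce the drop from smoothness to mere continuity of $\tilde V$.
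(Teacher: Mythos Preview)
Your outline captures the right architecture, and your explicit formula with $-2\cos x\,\tilde h'$ is correct (it is the manifestly real regrouping of the paper's expression with $-2e^{-ix}\tilde h'$). The handling of the zeros at $n\pi$ is also fine. But two steps, precisely the ones you flag as the heart of the matter, are not set up correctly.

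\textbf{Positivity of $\tilde f$.} Your proposed criterion $\|R\|_\infty<2\sqrt{1+m^2}\,\inf\tilde h$ is vacuous: $\tilde h(x)\to 0$ as $|x|\to\infty$, so $\inf\tilde h=0$ and the inequality can never hold. A pointwise version $|R(x)|<2\sqrt{1+m^2}\,\tilde h(x)$ might be salvageable, but you would still need a mechanism that compares a non-locally produced function against the local profile $\tilde h$. The paper avoids this altogether by a different splitting: writing $\tilde f = T(p)\tilde h + 2\omega_0(p)\tilde h$ with $T(k)=\sqrt{(k+1)^2+m^2}+\sqrt{(k-1)^2+m^2}-2\sqrt{k^2+m^2}$, one checks via an explicit Bessel-kernel computation that $\widehat T\geq 0$, so $T(p)$ is positivity preserving and $T(p)\tilde h\geq 0$. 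Positivity then reduces to a lower bound $\omega_0(p)\tilde h(x)\geq \langle x\rangle^{-1}$, which is what forces $m>34$.

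\textbf{The decay estimate.} You assert that $A_\pm$ and $A_+A_-$ preserve the $1/|x|$ rate ``up to $O(1/x^2)$ corrections, to be verified by Fourier-analytic estimates''. Unpacking this for your numerator, what you actually need is $(p^2+(A_+-\lambda_0)(A_--\lambda_0))\tilde h=O(\langle x\rangle^{-2})$, and any route to the second term---whether by iterating the resolvent identity $(A_\pm-\lambda_0)=(A_\pm+\lambda_0)^{-1}(p^2\pm 2p)$ or by direct symbol analysis---runs into $p^4\tilde h$. Since $\tilde h'''$ has a jump at the origin, $\tilde h\notin D(p^4)$ and $p^4\tilde h$ contains a Dirac mass; this is exactly the obstruction you allude to at the end but do not resolve. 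The paper handles it by mollifying: one replaces $\tilde h$ by $\tilde h_n=j_n*\tilde h$, applies the pointwise iteration bounds to $\tilde h_n\in\bigcap_k D(p^k)$, and passes to the limit, tracking separately the distributional piece $-\tilde g^{(3)}\tilde h^2$ whose derivative produces $16 j_n(x)$ and hence, after $\omega_0^{-2}$, an exponentially decaying contribution $e^{-m|x|}/(2m)$. Without this (or an equivalent device) the ``Fourier-analytic estimates'' you defer to cannot close.
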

This can be extended to the three dimensional case.
\begin{corollary}
 If $m>34$, write $\tilde W(x)=\tilde V(|x|)$, $x\in\RR^3$, and define
 \begin{align}
      \tilde{H}_{\rm r} := \sqrt{-\Delta+m^2}-m +\tilde W(x),
 \end{align}
on $L^2(\RR^3)$. Then
 \begin{align}
    \tilde v (x) = \frac{\tilde u(|x|)}{\sqrt{4\pi}|x|}
 \end{align}
 is in $D(\tilde{H}_{\rm r})$ and satisfies the eigenvalue equation $\tilde{H}_{\rm r} \tilde v = \lambda \tilde v$
 with $\lambda=\sqrt{1+m^2}-m$.
\end{corollary}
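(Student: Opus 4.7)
The plan is to mirror the proof of Corollary \ref{thm3d} verbatim, replacing each untilded object with its tilded counterpart, and invoking Theorem \ref{ThmMT} in place of Theorem \ref{thm1}. The core argument is the radial reduction: the three-dimensional kinetic operator $\sqrt{-\Delta+m^2}$, when acting on radial functions of the form $\phi(|x|)/|x|$, reduces to the one-dimensional operator $\sqrt{p^2+m^2}$ acting on $\phi$.

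First I would verify that $\tilde v$ is a legitimate smooth radial function in $L^2(\RR^3)$. Since $\tilde h(x)=1/(1+g(|x|))$ is even, and the Fourier multiplier $\sqrt{(k+1)^2+m^2}+\sqrt{(k-1)^2+m^2}$ is even in $k$, the operator in \eqref{def of fmt} preserves parity, so $\tilde f$ is even and $\tilde u=\tilde f\sin(x)$ is odd. Consequently $\tilde u(|x|)/|x|$ extends to a smooth radial function at the origin, and the $O(1/|x|)$ decay of $\tilde u$ from Theorem \ref{ThmMT} yields $\tilde v\in L^2(\RR^3)$.

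Next I would establish the reduction identity
$$
\sqrt{-\Delta+m^2}\,\Bigl(\frac{\phi(|x|)}{\sqrt{4\pi}\,|x|}\Bigr)=\frac{(\sqrt{p^2+m^2}\,\phi)(|x|)}{\sqrt{4\pi}\,|x|}
$$
for any odd $\phi\in D(\sqrt{p^2+m^2})$, by passing to the Fourier transform: for odd $\phi$, the 3D Fourier transform of $\phi(|x|)/(\sqrt{4\pi}\,|x|)$ equals (up to normalization) $\hat\phi(|k|)/|k|$, and the multiplier $\sqrt{|k|^2+m^2}$ depends only on $|k|$, so applying it commutes with the radial reduction. Subtracting $m\tilde v$ from both sides and specializing to $\phi=\tilde u$, the right-hand side becomes $((\sqrt{p^2+m^2}-m)\tilde u)(|x|)/(\sqrt{4\pi}\,|x|)$. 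By Theorem \ref{ThmMT} this equals $(\lambda\tilde u-\tilde V\tilde u)(|x|)/(\sqrt{4\pi}\,|x|)=\lambda\tilde v(x)-\tilde W(x)\tilde v(x)$, which rearranges into the desired eigenvalue equation $\tilde H_{\rm r}\tilde v=\lambda\tilde v$. Membership $\tilde v\in D(\tilde H_{\rm r})$ is inherited from $\tilde u\in D(\tilde H)$ via the same reduction applied to quadratic forms.

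The only real obstacle is bookkeeping: justifying the reduction identity in the operator-domain sense rather than merely formally, and confirming the parity of $\tilde f$ (equivalently, of the symmetrized symbol). Since $\tilde u$ is smooth with explicit decay and $\tilde V$ is continuous with $O(1/|x|)$ decay, no genuinely new analytic difficulty arises beyond what was already handled for the Neumann-Wigner case in Corollary \ref{thm3d}, and the entire argument reduces to the one-dimensional statement of Theorem \ref{ThmMT}.
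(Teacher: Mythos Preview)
Your proposal is correct and matches the paper's approach: the paper gives no separate proof for this corollary, implicitly relying on the same radial reduction used for Corollary~\ref{thm3d}. One minor caveat: you assert that $\tilde u$ is smooth, but in the Moses--Tuan case $\tilde h'''$ has a jump at the origin (as the paper remarks just after stating this corollary), so $\tilde u$ is not $C^\infty$; however, your argument only actually requires $\tilde u$ to be odd and to lie in $D(\sqrt{p^2+m^2})$, both of which follow from Theorem~\ref{ThmMT} and the even parity of $\tilde h$, so the reduction goes through unchanged.
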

\noindent
The proof of Theorem \ref{ThmMT} is more delicate than the proof of Theorem \ref{thm1}. The technical difficulty comes from
the fact that $\til{h}'''$ is not continuous and so $\til{h} \notin D(p^4)$. The non-relativistic limit yielding $u_\mathrm{MT}$
and $V_\mathrm{MT}$ can be derived in a similar way as in Theorem \ref{nonrellimit}; the details are left to the reader.

\subsection{Massless case and zero eigenvalue}
Next we consider the massless relativistic Schr\"odinger operator
\begin{align*}
  H^0(V):=\sqrt{-d^2/dx^2}+V(x),
\end{align*}
on $L^2(\RR)$. The proofs of Theorems \ref{thm1} and \ref{nonrellimit} exploit essential cancellations of an oscillatory
part, however, this does not extend to the massless case.
The reason can be appreciated more directly by using a Feynman-Kac-type description through which it is transparent
that the large-jump behaviours of the processes generated by the massive and massless operators differ essentially \cite{KL15b}
and this has an impact; this will be further explored elsewhere. Instead of strictly positive eigenvalues, we obtain two families
of potentials for which $H^0(V)$ has an eigenvalue equal to zero or a 0-resonance. Recall the hypergeometric function $_2F_1$,
see e.g. \cite{AAR}.
\begin{theorem}{\label{0energy-1}}
Let $\nu > 0$ and define
\begin{align*}
    V_\nu(x) &:= -\frac{2}{\sqrt{\pi}} \frac{\Gamma(\tfrac{1}{2}+\nu)}{\Gamma(\nu)}
                (1+x^2)^{-1/2} {}_2F_1(1, \tfrac{1}{2}+\nu,\tfrac{1}{2},-x^2) \\
    u_\nu(x) &:= \frac{1}{(1+x^2)^{\nu}}.
\end{align*}
\begin{enumerate}
\item[(1)]
If $0<\nu<1/2$, then $V_\nu(x)=O(1/|x|)$ and $u_\nu$ satisfies
\begin{align}
    \sqrt{-\frac{d^2}{dx^2}}u_\nu + V_\nu u_\nu = 0,   \label{0eveq}
\end{align}
in distributional sense.
\item[(2)]
If $\nu = \frac{1}{2}$, the same eigenvalue equation holds with
\begin{align*}
    V_{1/2}(x) := -\frac{1}{\pi} \left( \frac{1}{\sqrt{1+x^2}} - \frac{2|x|\,\mathrm{arcsinh}|x|}{1+x^2} \right)
    \quad \mbox{and} \quad u_{1/2}(x) :=\frac{1}{\sqrt{1+x^2}},
\end{align*}
and we have $V_{1/2}(x)=O(\log |x|/|x|)$.
\item[(3)]
If $1/2<\nu<1$, then $V_\nu(x)=O(1/|x|^{2-2\nu})$, and the eigenvalue equation \eqref{0eveq} holds.
\end{enumerate}
\end{theorem}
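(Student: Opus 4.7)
The strategy is to compute $\sqrt{-d^2/dx^2}\,u_\nu$ in closed form via the Fourier-multiplier representation and then recover $V_\nu$ as the ratio $-\sqrt{-d^2/dx^2}\,u_\nu/u_\nu$. The Fourier transform of $u_\nu(x)=(1+x^2)^{-\nu}$ is the classical Bessel-potential expression
$$
\widehat{u_\nu}(\xi) \;=\; \frac{2\sqrt{\pi}}{\Gamma(\nu)}\left(\frac{|\xi|}{2}\right)^{\nu-1/2} K_{\nu-1/2}(|\xi|),
$$
so Fourier inversion of $|\xi|\widehat{u_\nu}(\xi)$ reduces the calculation to the cosine--Bessel integral
$$
I_\nu(x) \;=\; \int_0^\infty \xi^{\nu+1/2}\,K_{\nu-1/2}(\xi)\cos(x\xi)\,d\xi.
$$
Evaluation of $I_\nu$ via a Weber--Schafheitlin type identity (Gradshteyn--Ryzhik 6.699, or equivalently the fractional-Laplacian calculus of \cite{DKK15}) gives $I_\nu(x)=2^{\nu-1/2}\Gamma(\nu+\tfrac12){}_2F_1(1,\nu+\tfrac12;\tfrac12;-x^2)$. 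Collecting the Gamma-factor prefactors produces a closed-form for $\sqrt{-d^2/dx^2}u_\nu$ proportional to ${}_2F_1(1,\nu+\tfrac12;\tfrac12;-x^2)$; dividing by $u_\nu$ matches the stated $V_\nu$ and so establishes \eqref{0eveq}.

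The decay rates follow from the Kummer connection formula for ${}_2F_1(a,b;c;z)$ at $z\to\infty$: with $a=1,\,b=\nu+\tfrac12,\,c=\tfrac12$, for $\nu\neq\tfrac12$,
$$
{}_2F_1\!\bigl(1,\nu+\tfrac12;\tfrac12;-x^2\bigr) \;=\; A_\nu\,x^{-2}\bigl(1+o(1)\bigr) \;+\; B_\nu\,x^{-2\nu-1}\bigl(1+o(1)\bigr),\qquad x\to\infty,
$$
where $A_\nu,B_\nu$ are explicit ratios of Gamma functions. For $0<\nu<\tfrac12$ the slower $x^{-1-2\nu}$ term dominates, and for $\tfrac12<\nu<1$ the $x^{-2}$ term dominates; combined with the remaining factor arising from dividing by $u_\nu$, this yields the bounds in (1) and (3). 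The transitional value $\nu=\tfrac12$ is the degenerate case where $A_\nu,B_\nu$ acquire poles and the two leading exponents coalesce, producing a logarithmic correction. The cleanest route here is the closed-form identity
$$
{}_2F_1\!\bigl(1,1;\tfrac12;-x^2\bigr) \;=\; \frac{1}{1+x^2} \;-\; \frac{|x|\,\mathrm{arcsinh}\,|x|}{(1+x^2)^{3/2}},
$$
which substituted into the general formula produces the explicit $V_{1/2}$ and, via $\mathrm{arcsinh}\,|x|\sim\log|x|$ at infinity, the $O(\log|x|/|x|)$ asymptotic.

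The main obstacle is the explicit Bessel--cosine evaluation and the careful Gamma-factor bookkeeping: although each individual ingredient is classical, organising them so that the resulting expression matches $V_\nu$ uniformly across the three regimes, and smoothly incorporates the logarithmic degeneracy at $\nu=\tfrac12$, is where the work concentrates. A secondary subtlety concerns case (1): for $0<\nu\leq\tfrac14$ the function $u_\nu$ fails to lie in $L^2(\RR)$ but remains a bounded, smooth tempered distribution, so \eqref{0eveq} has to be read in $\mathcal{S}'(\RR)$ via the Fourier-multiplier definition of $\sqrt{-d^2/dx^2}$ on tempered distributions, which is precisely the ``distributional sense'' referred to in the statement.
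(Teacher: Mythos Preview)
Your proposal is correct and follows essentially the same route as the paper: compute $\widehat{u_\nu}$ as a Bessel function, evaluate the inverse transform of $|\xi|\widehat{u_\nu}$ as a ${}_2F_1$ via a tabulated Bessel--cosine integral, and then read off the decay of $V_\nu$ from the large-argument behaviour of the hypergeometric function. The only cosmetic difference is in the asymptotic step: the paper uses the Pfaff transformation to map $-x^2\to\tfrac{x^2}{1+x^2}$ and then appeals to Gauss's summation formula at $z=1$, whereas you invoke the Kummer connection formula directly at $z\to-\infty$; these are equivalent manipulations of the same analytic continuation. For $\nu=\tfrac12$ the paper likewise bypasses the general hypergeometric expression and computes the relevant integral from tables, arriving at the same $\mathrm{arcsinh}$ formula you quote.
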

\begin{remark}
{\rm
\hspace{100cm}
\begin{enumerate}
\item[(1)]
Every $V_\nu \in C^\infty(\RR)$, is long-range, and positive as $|x|\to\infty$. Since $u_\nu \in L^2(\RR)$ only for
$\nu > \frac{1}{4}$, $H^0(V_\nu)$ has an eigenvalue equal to zero if $\frac{1}{4} < \nu \leq 1$, and a zero-resonance
if $0<\nu\leq \frac{1}{4}$.
\item[(2)]
Since $u_\nu$ is strictly positive, $H^0(V_\nu)$ is in the critical coupling situation, i.e., the operator $(-d^2/dx^2)^{1/2}
+ \lambda V_\nu$ has a strictly negative eigenvalue if and only if $\lambda > 1$. This has the flavour of being a
relativistic counterpart of the cases discussed in \cite{KS}.
\end{enumerate}
}
\end{remark}
Since $u_\nu$ is an even function, Theorem \ref{0energy-1} can not be extended to three dimensions.
The following result gives odd zero-energy eigenfunctions for another family of potentials.
\begin{theorem}{\label{0energy-2}}
Define
\begin{align*}
   &\til{V}_\nu(x)
   := \begin{cases}
      -\frac{2(1-2\nu)\Gamma\left(\nu-\tfrac{1}{2}\right)}{(1-\nu)\sqrt{\pi}\Gamma(\nu-1)}
      (1+x^2)^\nu  \; {}_2F_1\left(2,\tfrac{1}{2}+\nu;\tfrac{3}{2};-x^2\right),
      & \text{if } \nu\neq 1 \\
      -\frac{2}{1+x^2}, & \text{if } \nu=1,
   \end{cases} \\ \\
   &v_\nu(x) := \frac{x}{(1+x^2)^{\nu}}.
\end{align*}
Then $\sqrt{-\frac{d^2}{dx^2}}v_\nu + \til{V}_\nu v_\nu = 0$ holds in distributional sense, and
\begin{align}
\label{decs}
   &\til{V}_\nu(x)
   = \begin{cases}
      O(1/|x|), & \text{if } \, \frac{1}{2}<\nu< \frac{3}{2}, \, \nu\neq 1 \\
      O(1/|x|^{2}), & \text{if } \, \nu=1 \\
      O(\log |x|/|x|), & \text{if } \, \nu=\frac{3}{2}\\
      O(1/|x|^{4-2\nu}), & \text{if } \, \frac{3}{2}<\nu<2.
   \end{cases}
\end{align}
\end{theorem}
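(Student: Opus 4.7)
The natural strategy is to reduce Theorem~\ref{0energy-2} to Theorem~\ref{0energy-1} by exploiting the fact that $v_\nu$ is, up to a constant, the derivative of $u_{\nu-1}$. Concretely, for $\nu \neq 1$, setting $u_\alpha(x) := (1+x^2)^{-\alpha}$ one has $v_\nu(x) = -\tfrac{1}{2(\nu-1)}\,\tfrac{d}{dx} u_{\nu-1}(x)$. Since $\sqrt{-d^2/dx^2}$ and $d/dx$ are both Fourier multipliers they commute on $\mathcal{S}'(\RR)$, giving
\begin{equation*}
\sqrt{-d^2/dx^2}\, v_\nu \;=\; -\frac{1}{2(\nu-1)}\,\frac{d}{dx}\bigl[\sqrt{-d^2/dx^2}\, u_{\nu-1}\bigr].
\end{equation*}

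For $\nu \in (1,2)$, Theorem~\ref{0energy-1} applied at index $\nu-1 \in (0,1)$ substitutes $\sqrt{-d^2/dx^2}\, u_{\nu-1} = -V_{\nu-1}\, u_{\nu-1}$. For $\nu \in (1/2,1)$, $u_{\nu-1}$ grows polynomially like $|x|^{2-2\nu}$ but remains a tempered distribution; both sides of the identity extend meromorphically in $\nu$ and coincide on the common domain by analytic continuation from the range where Theorem~\ref{0energy-1} applies. After differentiating and using $u_{\nu-1}'(x) = -2(\nu-1)\, v_\nu(x)$, a direct calculation yields
\begin{equation*}
\tilde V_\nu(x) \;=\; V_{\nu-1}(x) \;-\; \frac{(1+x^2)\, V_{\nu-1}'(x)}{2(\nu-1)\, x}.
\end{equation*}
To match the closed form in the statement, I would apply the Gauss differentiation identity $\tfrac{d}{dz}{}_2F_1(a,b;c;z) = \tfrac{ab}{c}{}_2F_1(a+1,b+1;c+1;z)$ with $z=-x^2$ to compute $V_{\nu-1}'$, then collapse the resulting ${}_2F_1$ terms onto a single ${}_2F_1(2,\tfrac{1}{2}+\nu;\tfrac{3}{2};-x^2)$ via the Gauss contiguous relations, with the ratio $\Gamma(\nu-\tfrac{1}{2})/\Gamma(\nu-1)$ emerging from the Gamma-factor bookkeeping.

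The singular case $\nu=1$ I would handle separately: either by taking the limit $\nu \to 1$ in the generic formula (the $1/(\nu-1)$ pole cancels because $u_0 \equiv 1$ forces $\sqrt{-d^2/dx^2}\, u_0 = 0$ and hence $V_0 \equiv 0$, giving a compensating zero in the numerator), or by direct Fourier computation using $\mathcal{F}[x/(1+x^2)](\xi) \propto i\,\mathrm{sgn}(\xi)\, e^{-|\xi|}$, from which multiplication by $|\xi|$ and inverse transform give $\sqrt{-d^2/dx^2}\, v_1(x) = 2x/(1+x^2)^2$, matching $\tilde V_1(x) = -2/(1+x^2)$. The decay rates in \eqref{decs} then follow from the large-$|x|$ asymptotic ${}_2F_1(a,b;c;-x^2) \sim C_1|x|^{-2a} + C_2|x|^{-2b}$ (with a logarithmic correction when $a=b$): setting $a=2$, $b=\nu+\tfrac{1}{2}$ and multiplying by the prefactor $(1+x^2)^\nu \sim |x|^{2\nu}$ yields the four regimes in \eqref{decs}, the logarithm appearing precisely at $\nu = 3/2$ where $a=b$. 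The main obstacle will be the algebraic identification of the raw expression with the stated closed form through the contiguous relations, together with the distributional care required for $\nu \in (1/2, 1)$, where Theorem~\ref{0energy-1} must be extended by analytic continuation rather than applied directly.
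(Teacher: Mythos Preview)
Your approach is correct and shares the paper's core idea: write $v_\nu = (2-2\nu)^{-1}\,(d/dx)\,u_{\nu-1}$, commute $\sqrt{-d^2/dx^2}$ with $d/dx$, and invoke the explicit hypergeometric formula for $\sqrt{-d^2/dx^2}\,u_{\nu-1}$ established in the proof of Theorem~\ref{0energy-1}. The paper, however, streamlines the algebra you anticipate: rather than first expressing $\tilde V_\nu$ through $V_{\nu-1}$ and $V_{\nu-1}'$ and then collapsing two hypergeometric terms via contiguous relations, it differentiates the identity \eqref{eq353} (with index shifted to $\nu-1$) directly. Since that expression is a \emph{single} ${}_2F_1\bigl(1,\nu-\tfrac12;\tfrac12;-x^2\bigr)$ with no extra $(1+x^2)^{-1/2}$ prefactor, one application of the derivative rule $\tfrac{d}{dz}\,{}_2F_1(a,b;c;z)=\tfrac{ab}{c}\,{}_2F_1(a+1,b+1;c+1;z)$ already produces $x\,{}_2F_1\bigl(2,\tfrac12+\nu;\tfrac32;-x^2\bigr)$ with the correct Gamma factors, so no contiguous relations are needed at all. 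For the decay rates the paper uses the Pfaff transformation to send the argument to $x^2/(1+x^2)\to 1$ and then Gauss's summation theorem, in place of the $z\to-\infty$ connection formula you propose; these are equivalent routes. Your caveat about $\nu\in(\tfrac12,1)$, where $\nu-1<0$ lies outside the range covered by Theorem~\ref{0energy-1} as stated, is well taken; the paper uses \eqref{eq353} at that shifted index without comment, relying implicitly on its analytic continuation in $\nu$.
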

\newpage
\begin{remark}
{\rm
\hspace{100cm}
\begin{enumerate}
\item[(1)]
$H^0(\til{V}_\nu)$ has a  zero eigenvalue if $\nu > \frac{3}{4}$, and a zero-resonance if $\frac{1}{2}<\nu \leq \frac{3}{4}$.
\item[(2)]
A special situation occurs for $\nu=1$. In this case $H^0(\til{V}_1)$ has a zero-energy eigenvalue, and $\til{V}_1(x)=-\frac{2}{1+x^2}$
is a smooth, short-range and strictly negative potential. Note that this is the only case when $\til V_\nu$ is short-range.
\item[(3)]
Since $v_\nu$ is an odd smooth function, by taking its radial part as in Corollary {\ref{thm3d}}, the conclusion of Theorem
\ref{0energy-2} can be extended to three dimensions.
\item[(4)]
Both $V_\nu$ and $\til{V}_\nu$ have a finite number of zeroes. For $0<\nu<\frac{1}{2}$, we have that $|x|V_\nu(x)$ tends to
a positive number given below by (\ref{D}) as $|x|\to\infty$, i.e., $V_\nu(x)$ has no zeroes beyond large enough $|x|$, and since
$_2F_1$ is an analytic function, there is no accumulation point of the zeroes of $V_\nu$. A similar argument applies for the other
ranges of $\nu$ and for $\til{V}_\nu$. In fact, we conjecture that each of these functions has at most one zero.
\end{enumerate}
}
\end{remark}

\section{Proofs}

\subsection{Proof of Theorem \ref{thm1}}
We start by showing some properties of $h$.
\begin{lemma}{\label{bh}}
We have that $h \in C^\infty(\RR)$ and the estimates
\begin{align}
   &  \frac{1}{6} \frac{1}{1+x^2} < h(x) < \frac{1}{x^2+2/3} \label{bound h1}\\
   & |h'(x)| \leq 8 h(x)^{3/2}  \label{bound h2}\\
   & |h''(x)| \leq 120 h(x)^{3/2},   \label{bound h3}
\end{align}
hold for all $x \in \RR$.
\end{lemma}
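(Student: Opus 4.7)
The plan is to establish each claim by direct computation, exploiting the simple algebraic form $h = 1/(1+g^2)$ with $g(x) = 2x - \sin 2x$. Smoothness is immediate: $g \in C^\infty(\RR)$, and $1 + g(x)^2 \geq 1$ never vanishes, so $h = (1+g^2)^{-1}$ is $C^\infty$.

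For the two-sided bound \eqref{bound h1} I would reduce everything to polynomial inequalities on $g(x)^2$. The lower bound $h > 1/(6(1+x^2))$ is equivalent to $g(x)^2 < 5 + 6x^2$; since $|g(x)| \leq 2|x| + 1$ (because $|\sin 2x| \leq 1$), this follows from $(2|x|+1)^2 < 6x^2 + 5$, i.e.\ $2x^2 - 4|x| + 4 > 0$, which holds because $(|x|-1)^2 + 1 > 0$. The upper bound $h < 1/(x^2 + 2/3)$ is equivalent to $g(x)^2 > x^2 - 1/3$; this is automatic when $x^2 \leq 1/3$ (using that $g$ is strictly monotone and $g(0)=0$, so $g(x)^2 > 0$ whenever $x \neq 0$), while for $|x| \geq 1/2$ the estimate $|g(x)| \geq 2|x|-1$ reduces the inequality to $3(|x|-2/3)^2 \geq 0$, with the single critical point $|x|=2/3$ checked by hand using $\sin(4/3) < 1$.

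For the derivative bounds I would use the standing estimates $g'(x) = 4 \sin^2 x \in [0,4]$ and $|g''(x)| = |4 \sin 2x| \leq 4$, together with the key algebraic identity $g^2 h = g^2/(1+g^2) \leq 1$, which is equivalent to $|g|\,h^{1/2} \leq 1$. Then $h' = -2 g g' h^2$ gives
\begin{equation*}
  |h'(x)| \;\leq\; 8 \,|g(x)| \, h(x)^{2} \;=\; 8\bigl(|g| h^{1/2}\bigr) h^{3/2} \;\leq\; 8\, h(x)^{3/2},
\end{equation*}
which is \eqref{bound h2}. A second differentiation produces
\begin{equation*}
  h'' = -2 (g')^2 h^2 \;-\; 2 g g'' h^2 \;+\; 8\, g^2 (g')^2 h^3.
\end{equation*}
Each of the three summands will be bounded by a constant times $h^{3/2}$ by combining the estimates already collected: the first uses $(g')^2 \leq 16$ and $h \leq 1$ (so $h^2 \leq h^{3/2}$); the second absorbs one factor of $|g|$ via $|g|h^{1/2} \leq 1$; the third is handled by $g^2 h \leq 1$ together with $h^2 \leq h^{3/2}$.

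The proof is therefore entirely a matter of elementary calculus once the explicit formulas for $h'$ and $h''$ are in hand. The only delicate point I expect is bookkeeping of constants: the crude combination just sketched yields a constant somewhat larger than $120$, and to sharpen it down to the stated value one should exploit finer estimates such as $(g')^2 = 16 \sin^4 x \leq 16 \sin^2 x$, or the maximum $g^2 h^2 \leq 1/4$ obtained by optimizing $u \mapsto u/(1+u)^2$. No conceptual obstacle arises, and once \eqref{bound h2}--\eqref{bound h3} are proved in this form they will be ready to feed into the later Fourier/functional-calculus estimates needed for Theorem \ref{thm1}.
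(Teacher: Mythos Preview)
Your approach is correct and essentially the same as the paper's: the paper also dismisses smoothness and \eqref{bound h1} as elementary (you actually spell out a clean argument for the latter), and proves \eqref{bound h2} exactly as you do via $|g|h^{1/2}\le 1$.

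The only point worth flagging is the constant in \eqref{bound h3}. Your term-by-term bound on $h'' = -2(g')^2h^2 - 2gg''h^2 + 8g^2(g')^2h^3$ gives $32+8+128=168$, and neither of your proposed refinements recovers $120$: the bound $g^2h^2\le 1/4$ only yields $8g^2(g')^2h^3 \le 32h$, which is \emph{weaker} than $128h^{3/2}$ for small $h$, and $(g')^2\le 16\sin^2 x$ has nothing to pair with. The paper instead uses the \emph{equality} $g^2h = 1-h$ (a sharpening of your $g^2h\le 1$) to collapse the first and third terms into $2(g')^2h^2(3-4h)$; since $|3-4h|\le 3$, this gives $96h^{3/2}+8h^{3/2}=104h^{3/2}$, comfortably below $120$. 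This is a bookkeeping refinement, not a new idea, and in any case the constant only affects the inessential threshold on $m$ later on.
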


\begin{proof}
The fact $h \in C^\infty(\RR)$ and estimate \eqref{bound h1} are elementary.
Note that $g'(x) = 4\sin^2 x$, thus $|g'(x)|\leq 4$ and $|g''(x)|\leq 4$.
Since $h' = -2 g' g h^2$, we have
\begin{align}
  |h'(x)| \leq 8 h(x)\frac{|g(x)|}{1+g(x)^2}
           = 8 h(x)  \frac{|g(x)|}{\sqrt{1+g(x)^2}} \frac{1}{\sqrt{1+g(x)^2} }
          \leq 8 h(x)^{3/2}.
\end{align}
Also, we have $h'' = -8 (g')^2 h^3 + 6 (g')^2 h^2 - 2 g g'' h^2$, and thus
\begin{align}
  |h''(x)| & \leq 2|(4h(x)-3)(g'(x))^2 h(x)^2| + 2|g''(x)g(x)h(x)^2 |  \nonumber \\
           & \leq 2| (3+h(x))(g'(x))^2 h(x)^2| + 2|g''(x)g(x)h(x)^2 | \nonumber \\
           & \leq 8 h(x) (4(3+h(x))h(x)+|g(x)| h(x)) \nonumber \\
           & \leq 8 h(x) (4(3+1)h(x)^{1/2}+h(x)^{1/2})
           \leq 120 h(x)^{3/2}.
\end{align}
\end{proof}

We write for simplicity $p:= -id/dx$ and $\braket{x} := (1+x^2)^{1/2}$. The $n$th derivative of $h$ will be denoted
by $h^{(n)}$.
\begin{lemma}{\label{prop of h}}
For every $n\in \NN$ there exists a constant $C_n>0$ such that
\begin{align}
   |h^{(n)}(x)| \leq C_n \braket{x}^{-3}, \qquad x\in\RR. \label{bound of hn}
\end{align}
In particular, we have that $h, f,u \in \cap_{n=1}^\infty D(p^n)$.
\end{lemma}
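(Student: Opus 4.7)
The plan is to first establish the pointwise bound \eqref{bound of hn} by exploiting the composition structure $h = \psi \circ g$ with $\psi(u) := (1+u^2)^{-1}$, and then deduce $h,f,u \in \bigcap_{n} D(p^n)$ as a Sobolev regularity statement. For $n\ge 1$, Fa\`a di Bruno's formula reads
\[
   h^{(n)}(x) \;=\; \sum_{\pi} c_\pi\,\psi^{(|\pi|)}\!\bigl(g(x)\bigr)\, \prod_{B\in\pi} g^{(|B|)}(x),
\]
where $\pi$ runs over the set partitions of $\{1,\dots,n\}$. Each factor $g^{(j)}$ with $j\ge 1$ is a bounded trigonometric function on $\RR$ (since $g'=4\sin^2 x$ and higher derivatives involve only sines and cosines), so the product $\prod_{B}g^{(|B|)}(x)$ is uniformly bounded in $x$. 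For the outer factor, partial fractions give $\psi(u) = \tfrac{1}{2i}\bigl(\tfrac{1}{u-i}-\tfrac{1}{u+i}\bigr)$, so
\[
   \psi^{(k)}(u) \;=\; \frac{P_k(u)}{(1+u^2)^{k+1}}, \qquad \deg P_k \le k,
\]
where the degree bound comes from the cancellation in $(u+i)^{k+1}-(u-i)^{k+1}$. Consequently $|\psi^{(k)}(u)| \le C_k(1+u^2)^{-k/2 -1}$.

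Combining these ingredients with the lower bound $1+g(x)^2 \ge c\langle x\rangle^2$ (immediate from the right inequality in \eqref{bound h1}), every term in the Fa\`a di Bruno sum will be majorized by $C\langle x\rangle^{-(|\pi|+2)}$. Since $|\pi|\ge 1$ for each partition, the slowest decay is $\langle x\rangle^{-3}$, and summing the finitely many terms yields \eqref{bound of hn}.

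For the second assertion, \eqref{bound of hn} makes $h^{(n)}\in L^2(\RR)$ for every $n\ge 1$, because $\langle x\rangle^{-3}$ is square-integrable on $\RR$; together with $h\in L^2$ (from \eqref{bound h1}) this gives $h \in H^n(\RR) = D(p^n)$ for all $n$. Both Fourier multipliers $\sqrt{(p\pm 1)^2+m^2}$ have symbols of order one in $\xi$ and hence map $H^{n+1}$ continuously into $H^n$, so by iteration $f\in\bigcap_n D(p^n)$. Finally, since $\sin x$ and all of its derivatives are uniformly bounded, the Leibniz rule shows that multiplication by $\sin x$ preserves each $H^n(\RR)$, giving $u=f\sin x \in \bigcap_n D(p^n)$.

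The main technical point is the sharper decay estimate $|\psi^{(k)}(u)| \le C_k(1+u^2)^{-k/2-1}$. Without the extra half-power produced by the cancellation in the numerator of $\psi^{(k)}$ (equivalently, the degree bound $\deg P_k = k$ rather than $k+1$), one would obtain only $\langle x\rangle^{-k-1}$, which degenerates to $\langle x\rangle^{-2}$ at $k=1$ and would be insufficient for the stated decay. The rest of the argument is essentially bookkeeping: combinatorics of the Fa\`a di Bruno sum, followed by standard Sobolev-space mapping properties of order-one Fourier multipliers.
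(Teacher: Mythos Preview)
Your argument is correct, but the strategy differs from the paper's. The paper proceeds by induction on $n$: from the identity $h'=-2h^2\,gg'$ one writes $h^{(n+1)}=-2(h^2\,gg')^{(n)}$ and expands by Leibniz, using the inductive hypothesis to bound $|(h^2)^{(k)}|\le C\langle x\rangle^{-5}$ for $k\ge 1$ (the extra $h$-factor in $h^2$ buys two more powers of decay) and $|(gg')^{(k)}|\le C\langle x\rangle$; the $k=0$ term in the Leibniz sum then dominates and gives the stated $\langle x\rangle^{-3}$. Your route via Fa\`a di Bruno on the composition $h=\psi\circ g$ with $\psi(u)=(1+u^2)^{-1}$ is more direct and avoids the induction altogether; the essential point in your version is the partial-fraction observation that $\psi^{(k)}$ has a numerator of degree only $k$, which upgrades the naive bound $(1+u^2)^{-(k+1)/2}$ to $(1+u^2)^{-k/2-1}$ and hence yields $\langle x\rangle^{-3}$ already at $|\pi|=1$. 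Both arguments pivot on the same structural fact---that the innermost derivative contributes only $\langle x\rangle^{-3}$ while every additional derivative of the outer function improves the decay---but your packaging makes this explicit in one step, at the cost of invoking a slightly heavier combinatorial formula. The treatment of $f$ and $u$ (order-one Fourier multiplier, Leibniz) is essentially the same as the paper's ``by functional calculus''.
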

\begin{proof}
Note that $|g(x)|\leq 2\braket{x}$.
Since $g'(x) = 4\sin^2x$, it is clear that $g^{(n)}(x)$ is bounded for all $n\geq 1$. We show \eqref{bound of hn} by induction
on $n$. For $n=1$ we have $h'(x)=-2h(x)^2g(x)g'(x)$, which is bounded by $36\braket{x}^{-3}$ and \eqref{bound of hn} holds.
Suppose that the claim holds for $k\leq n$. We estimate $(h^2)^{(k)}$ and $(g g')^{(k)}$. By the assumption, we have for $k\in \NN$
that
\begin{align}
 |(h^2)^{(k)}(x)|
 &= \bigg|2h(x)h^{(k)}(x) + \sum_{j=1}^{k-1} \binom{k}{j} h^{(j)}(x) h^{(k-j)}(x)\bigg| \nonumber \\
 &\leq 3\braket{x}^{-2} C_k \braket{x}^{-3} + \bigg|\sum_{j=1}^{k-1} \binom{k}{j} C_j C_{k-j}
                   \braket{x}^{-3} \braket{x}^{-3} \bigg| 
                \leq C_{1,k} \braket{x}^{-5},
\end{align}
where $C_{1,k}$ is a constant. Since all derivatives of $g$ are bounded, the estimate $|(g g')^{(k)}(x)| \leq C_{2,k}\braket{x}$,
$k=0,1,2,\ldots$, holds with a suitable constant $C_{2,k}$. Thus we have
\begin{align}
  |h^{(n+1)}(x)|
  &= | (h')^{(n)}(x) |  = 2 | (h^2 gg')^{(n)}(x)| \nonumber \\
  &\leq 2h^2(x)(gg')^{(n)}(x) +  2\sum_{k=0}^{n-1} \binom{n}{k} \Big|(h^2)^{(n-k)}(x) \, (gg')^{(k)}(x)\Big|.   \label{yonichi}
\end{align}
The first term in \eqref{yonichi} is of order $\braket{x}^{-3}$, and the second of order $\braket{x}^{-4}$. Thus $|h^{(n+1)}(x)|$
is bounded by $C_{n+1} \braket{x}^{-3}$ with a constant $C_{n+1}$, which completes the induction step. The bound \eqref{bound of hn}
implies that $h\in D(p^n)$ for all $n\in \NN$. Hence we obtain by functional calculus that $f,u \in D(p^n)$ for all $n$.
\end{proof}

Note that since $h$ is real and even, the eigenfunction $u$ is real and odd, and the potential $V$ is real and even. We write
\[
\ome(p):=\sqrt{p^2+m^2}-m \quad \mbox{and} \quad  \ome_0(p):=\sqrt{p^2+m^2}.
\]
\begin{lemma}{\label{keylemma}}
We have that
\begin{align}
  & \ome(p)u(x) = \lambda u(x) + \sin(x)(\ome(p+1)-\lambda)f(x)- 2e^{-ix}h'(x) \\
  & V(x) = -\frac{(\ome(p+1)-\lambda)f(x)}{f(x)} + \frac{2h'(x)e^{-ix}}{f(x)\sin x}. 
\end{align}
\end{lemma}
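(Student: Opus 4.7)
The plan is to reduce the action of $\ome(p)$ on $u = f\sin x$ to two elementary ingredients: the Fourier shift identity $\ome(p)(e^{\pm ix}\phi)(x) = e^{\pm ix}(\ome(p\pm 1)\phi)(x)$, obtained by changing the integration variable in the Fourier representation of $\ome(p)$, and a difference-of-squares identity that exploits the specific form $f = (\ome_0(p+1)+\ome_0(p-1))h$.

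First I would expand $u = \tfrac{1}{2i}(e^{ix}f - e^{-ix}f)$ and apply the shift identity termwise to obtain
\begin{equation*}
\ome(p) u(x) = \frac{1}{2i}\bigl[e^{ix}(\ome(p+1)f)(x) - e^{-ix}(\ome(p-1)f)(x)\bigr].
\end{equation*}
Adding and subtracting $e^{-ix}(\ome(p+1)f)(x)$ inside the bracket then factors out $\sin x$:
\begin{equation*}
\ome(p) u = \sin(x)\,\ome(p+1)f + \frac{e^{-ix}}{2i}\bigl(\ome(p+1)-\ome(p-1)\bigr)f.
\end{equation*}

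The crucial step is the simplification of the residual term. Since $\ome(p\pm 1) = \ome_0(p\pm 1) - m$, the constant $m$ drops out of the difference, and because functions of $p$ commute the resulting symbols telescope against the two summands defining $f$:
\begin{equation*}
(\ome(p+1)-\ome(p-1))f = \bigl(\ome_0(p+1)^2 - \ome_0(p-1)^2\bigr)h = \bigl((p+1)^2-(p-1)^2\bigr)h = 4ph = -4ih'(x).
\end{equation*}
Plugging this back and then adding and subtracting $\lambda u = \lambda f \sin x$ produces the first claimed identity. The second identity is immediate from the definition $V = \lambda - \ome(p)u/u$ after dividing through by $u = f\sin x$.

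The main obstacle, such as it is, is purely domain bookkeeping: the Fourier shift identity and the telescoping manipulation require $e^{\pm ix}f \in D(\ome(p))$ and $h$ in the domain of $\ome_0(p\pm 1)^2 = (p\pm 1)^2 + m^2$. Both are supplied by Lemma~\ref{prop of h}, which places $h$ and $f$ in $\bigcap_{n \in \NN} D(p^n)$ with the uniform decay $|h^{(n)}(x)| \leq C_n \braket{x}^{-3}$, so all operator identities above hold in $L^2(\RR)$.
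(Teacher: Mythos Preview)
Your argument is correct and follows exactly the same route as the paper: the Fourier shift identity $e^{-ix}\ome(p)e^{ix}=\ome(p+1)$, the add-and-subtract to pull out $\sin x$, and the difference-of-squares collapse $(\ome_0(p+1)-\ome_0(p-1))(\ome_0(p+1)+\ome_0(p-1))h=4ph=-4ih'$ are precisely the steps the paper carries out. Your explicit mention of the domain bookkeeping via Lemma~\ref{prop of h} is a welcome addition that the paper leaves implicit.
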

\begin{proof}
By functional calculus it is readily seen that the equality $e^{-ix}pe^{ix}=p+1$ gives $e^{-ix}\ome(p)e^{ix} = \ome(p+1)$. Using this,
we obtain
\begin{align}
  \ome(p)u = \ome(p)\frac{1}{2i}(e^{ix}-e^{-ix})f 
           &=\frac{1}{2i}(e^{ix}\ome(p+1) - e^{-ix}\ome(p-1))f  \nonumber \\
           &= \sin x \ome(p+1) f + \frac{1}{2i}e^{-ix}(\ome(p+1) - \ome(p-1))f  \nonumber \\
           &= \lambda u(x) + \sin x (\ome(p+1)-\lambda) f + \frac{e^{-ix}}{2i}(\ome_0(p+1) - \ome_0(p-1))f.
\end{align}
By the definition of $f$, we furthermore get that
\begin{align}
(\ome_0(p+1) - \ome_0(p-1))f(x)
&= (\ome_0(p+1) - \ome_0(p-1))(\ome_0(p+1)+\ome_0(p-1))h(x) \nonumber \\
&= (\ome_0(p+1)^2 - \ome_0(p-1)^2)h(x) = -4ih'(x).
\end{align}
\end{proof}
Since $h'=-2g'gh = -8\sin^2 x g(x)h(x)^2$,
the potential can be written as
\begin{align}
 V(x) = -\frac{1}{f(x)} \big(\ome(p+1)-\lambda \big)f(x) - \frac{16e^{-ix}}{f(x)} g(x)h(x)^2 \sin x.   \label{V3}
\end{align}

The following lemma makes the crucial steps for proving the main statement. In Propositions \ref{lb of f} and \ref{f decay}
below we will prove that conditions (P.1)-(P.2) in the lemma hold.
\begin{lemma}{\label{alt}}
If
\begin{itemize}
 \item[(P.1)]
 there exists a constant $C>0$ such that $ f(x) \geq C (1+x^2)^{-1} $ for all $x \in \RR$,
 \item[(P.2)]
 $ (\ome(p+1)-\lambda)f(x) = f(x) O(|x|^{-1})$ as $x\to\infty$,
\end{itemize}
then Theorem \ref{thm1} follows.
\end{lemma}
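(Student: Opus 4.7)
The plan is to derive the three conclusions of Theorem \ref{thm1}---smoothness of $V$, the decay $V(x)=O(1/|x|)$, and the eigenvalue equation on the correct domain---from (P.1), (P.2), and the earlier Lemmas \ref{bh}, \ref{prop of h}, \ref{keylemma}. I would start from the expression supplied by Lemma \ref{keylemma},
\[
    V(x) \;=\; -\frac{(\ome(p+1)-\lambda)f(x)}{f(x)} \;+\; \frac{2 e^{-ix} h'(x)}{f(x)\sin x},
\]
and use the identity $h'(x) = -8\sin^2 x\, g(x) h(x)^2$ (from $g'(x)=4\sin^2 x$ together with $h'=-2 g' g h^2$) to cancel the offending $\sin x$ in the denominator, producing the manifestly non-singular form (\ref{V3}). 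At that point the zeros of $u=f\sin x$ at $x=n\pi$ no longer cause trouble.

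Smoothness of $V$ is then immediate: $f$ is smooth by Lemma \ref{prop of h}, $g$ and $h$ are smooth by Lemma \ref{bh}, and (P.1) provides $f(x)\ge C/(1+x^2)>0$ which keeps the division by $f$ harmless. Real-valuedness is inherited from $u$: since $h$ is real and the symbol $\ome_0(\xi+1)+\ome_0(\xi-1)$ is a real, even function of $\xi$, the operator $\ome_0(p+1)+\ome_0(p-1)$ preserves the real subspace, so $f$ is real and hence so is $u=f\sin x$. The defining identity $V=\lambda-(\ome(p)-m)u/u$ then forces $V$ to be real wherever $u\neq 0$, and by the continuity just established, everywhere on $\RR$.

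For the decay, the first term of (\ref{V3}) is $O(1/|x|)$ directly by (P.2). For the second term, Lemma \ref{bh} gives $h(x)=O(1/x^2)$ and $|g(x)|\le 2|x|+1$, so the numerator $|16 e^{-ix} g(x) h(x)^2 \sin x|$ is $O(|x|^{-3})$; combined with $1/f(x)=O(x^2)$ from (P.1), this term is also $O(1/|x|)$.

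Finally, Lemma \ref{prop of h} places $u\in\bigcap_{n\ge 1} D(p^n)\subset D(\sqrt{p^2+m^2})=D(\ome(p))$, and since $V$ has just been shown to be continuous and bounded on $\RR$, multiplication by $V$ is a bounded perturbation, so $D(H)=D(\ome(p))\ni u$. The eigenvalue equation $Hu=\lambda u$ then holds by the very definition of $V$ in (\ref{V}). The real obstacle is not this reduction but the verification of (P.1) and (P.2) themselves---a pointwise lower bound on $f=(\ome_0(p+1)+\ome_0(p-1))h$ and the decay rate of $(\ome(p+1)-\lambda)f/f$, both of which presumably require the hypothesis $m\ge 146$ in an essential way---which I would expect to be handled in the subsequent Propositions \ref{lb of f} and \ref{f decay}.
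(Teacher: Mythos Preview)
Your argument is correct and follows essentially the same route as the paper: rewrite $V$ via Lemma \ref{keylemma} and the identity $h'=-8\sin^2 x\,g h^2$ into the form \eqref{V3}, then use (P.1) to kill any singularity from dividing by $f$, (P.2) for the decay of the first term, and Lemma \ref{bh} together with (P.1) for the decay of the second. The one point you leave slightly implicit is the smoothness of the numerator $(\ome(p+1)-\lambda)f$; the paper notes explicitly that $f\in\bigcap_n D(p^n)$ forces $\ome(p+1)f\in\bigcap_n D(p^n)\subset C^\infty(\RR)$, which is what makes the first quotient in \eqref{V3} smooth and not merely nonsingular.
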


\begin{proof}
Note that the eigenvalue equation \eqref{eveq} is equivalent to \eqref{V} and \eqref{V3}. Since by Lemma \ref{prop of h}
we have $f\in \cap_{n=1}^\infty D(p^n)$, it follows that $\ome(p+1) f \in \cap_{n=1}^\infty D(p^n)$, in particular,
$\ome(p+1)f \in C^\infty(\RR)$. By (P.1) the denominator $f(x)$ in \eqref{V3} has no zeroes, and thus $V$ has no singularity.
Since $f(x)$, $\ome(p+1)f(x)$ and $g(x)h(x) e^{-ix}\sin x $ are smooth functions, $V$ is also smooth. By (P.2) the first term
of \eqref{V3} is of order $\braket{x}^{-1}$, and by (P.1) and Lemma \ref{bh} the second term of \eqref{V3} is also of order
$\braket{x}^{-1}$. Hence $V(x)$ behaves like $\braket{x}^{-1}$ at $|x|\to\infty$.
\end{proof}


Let $K: \RR \to \RR$ be a Borel measurable function. Writing $\widehat{K}(x) =  \frac{1}{2\pi} \int_\RR K(k) e^{-ikx} dk$ for
Fourier transform, the operator $K(p)$ can be formally defined by
\begin{align*}
   K(p) g(x) = \frac{1}{\sqrt{2\pi}} (\widehat{K}*g)(x)
             = \frac{1}{\sqrt{2\pi}} \int_\RR \widehat{K}(x-y) g(y) dy.
\end{align*}

\begin{lemma}{\label{bome}}
For every  $x \in \RR$ we have
\begin{align}
   \sqrt{\frac{2}{\pi}} (1-e^{-1}) \frac{e^{-m|x|}}{\sqrt{1+2m|x|}}
   \leq
   \widehat{\ome_0^{-1}}(x)
   \leq
   \frac{e^{-m|x|}}{\sqrt{m|x|}}.   \label{bound_ome0}
\end{align}
\end{lemma}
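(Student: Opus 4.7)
The plan is to identify $\widehat{\ome_0^{-1}}(x)$ explicitly with a constant multiple of the modified Bessel function $K_0(m|x|)$ and then to bound $K_0$ using an integral representation in which the exponential factor $e^{-m|x|}$ can be extracted cleanly. First I would apply the subordinator identity $(k^2+m^2)^{-1/2}=\pi^{-1/2}\int_0^\infty s^{-1/2}e^{-(k^2+m^2)s}\,ds$, interchange the orders of integration, perform the Gaussian integral in $k$, and substitute $s=|x|e^u/(2m)$, arriving at the classical identity $\widehat{\ome_0^{-1}}(x)=\sqrt{2/\pi}\,K_0(m|x|)$.

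To handle $K_0$ I would use the representation obtained from $K_0(w)=\int_1^\infty e^{-w\tau}(\tau^2-1)^{-1/2}\,d\tau$ by first substituting $\tau=1+s$ to pull out $e^{-w}$ and then rescaling $s=t/w$ for $w>0$:
\[
K_0(w)=e^{-w}\int_0^\infty\frac{e^{-t}}{\sqrt{t(t+2w)}}\,dt.
\]
The upper bound is then immediate from $\sqrt{t(t+2w)}\geq\sqrt{2wt}$: the $t$-integral reduces to $\int_0^\infty t^{-1/2}e^{-t}\,dt=\sqrt{\pi}$, giving $K_0(w)\leq\sqrt{\pi/(2w)}\,e^{-w}$ and hence $\widehat{\ome_0^{-1}}(x)\leq e^{-m|x|}/\sqrt{m|x|}$.

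For the lower bound I would restrict the integration to $t\in[0,1]$ and exploit the factorization $t(t+2w)-(1+2w)=(t-1)(t+1+2w)$, which is $\leq 0$ on $[0,1]$, so that $\sqrt{t(t+2w)}\leq\sqrt{1+2w}$ on this interval. Then
\[
K_0(w)\geq\frac{e^{-w}}{\sqrt{1+2w}}\int_0^1 e^{-t}\,dt=\frac{(1-e^{-1})\,e^{-w}}{\sqrt{1+2w}},
\]
from which the stated lower bound on $\widehat{\ome_0^{-1}}(x)$ follows after multiplication by $\sqrt{2/\pi}$. The only genuinely delicate point is pinning down the correct prefactor $\sqrt{2/\pi}$ in the Bessel-function identification (i.e., tracking the Fourier-transform normalization through the Schwinger representation); the elementary inequalities used thereafter are routine.
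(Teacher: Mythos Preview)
Your proof is correct and essentially identical to the paper's: both identify $\widehat{\ome_0^{-1}}(x)=\sqrt{2/\pi}\,K_0(m|x|)$, rewrite $K_0(z)=e^{-z}\int_0^\infty e^{-s}/\sqrt{s(s+2z)}\,ds$, and then use $\sqrt{s(s+2z)}\ge\sqrt{2sz}$ for the upper bound and the restriction to $[0,1]$ with $\sqrt{s(s+2z)}\le\sqrt{1+2z}$ for the lower bound. The only cosmetic differences are that you derive the Bessel identity via the Schwinger representation rather than citing it, and you start from $K_0(w)=\int_1^\infty e^{-w\tau}(\tau^2-1)^{-1/2}\,d\tau$ instead of $\int_0^\infty e^{-z\cosh t}\,dt$ (these agree via $\tau=\cosh t$).
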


\begin{proof}
Using that
\begin{align}
\label{K0}
   \widehat{\ome_0^{-1}}(x) = \sqrt{\frac{2}{\pi} } K_0(m|x|),
\end{align}
where $K_0$ denotes the modified Bessel function of the second kind, 
by a change of variable we obtain
\begin{align} \label{bessel0}
K_0(z) = \int_0^\infty \exp(-z \cosh t) dt
       = e^{-z} \int_0^\infty \frac{e^{-s}}{\sqrt{s(s+2z)} } ds
\end{align}
for all $z > 0$. Since
\begin{eqnarray*}
 K_0(z) 
        \geq  e^{-z} \int_0^1 \frac{e^{-s}}{\sqrt{1+2z} } \, ds
         \geq (1-e^{-1})  \frac{e^{-z}}{\sqrt{1+2z}},
\end{eqnarray*}
the lower bound in \eqref{bound_ome0} follows. To get the upper bound, we estimate
\begin{align*}
 K_0(z) \leq e^{-z} \int_0^\infty \frac{e^{-s}}{\sqrt{2sz} } ds
          = \frac{e^{-z}}{\sqrt{2z}}\sqrt{\pi}.
\end{align*}
\end{proof}

\begin{lemma}{\label{lbomeh}}
If $m>18$, then for all $x \in \RR$,
\begin{align}
  \ome_0(p)^{-1} h(x) 
  \geq \frac{1}{25m}\braket{x}^{-2}.
\end{align}
\end{lemma}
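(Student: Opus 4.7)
The plan is to exploit the fact that $\ome_0(p)^{-1}$ acts as convolution against a strictly positive Bessel kernel. Combining the identity $\widehat{\ome_0^{-1}}(x) = \sqrt{2/\pi}\, K_0(m|x|)$ from \eqref{K0} with the convolution formula $K(p)g(x) = (2\pi)^{-1/2}(\widehat{K}\ast g)(x)$ recalled just above Lemma \ref{bome} yields
\[
\ome_0(p)^{-1} h(x) \,=\, \frac{1}{\pi} \int_\RR K_0(m|x-y|)\, h(y)\, dy.
\]
From here I would insert the pointwise lower bound $h(y) \geq \tfrac{1}{6}(1+y^2)^{-1}$ provided by Lemma \ref{bh} and localize the resulting integral near $y=x$.

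Specifically, restricting the integration to $|y-x|\leq 1$ and using the elementary inequality $2|x|\leq 1+x^2$, one has $1+y^2\leq 1+(|x|+1)^2\leq 3(1+x^2)$, so $(1+y^2)^{-1}\geq \tfrac{1}{3}\braket{x}^{-2}$ can be pulled out of the integral. After the substitution $u=m|t|$ this gives
\[
\ome_0(p)^{-1} h(x) \,\geq\, \frac{1}{18\pi\braket{x}^{2}}\int_{-1}^{1} K_0(m|t|)\,dt \,=\, \frac{1}{9\pi m\braket{x}^{2}}\int_0^m K_0(u)\,du.
\]

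Next, I would invoke the classical identity $\int_0^\infty K_0(u)\,du = \pi/2$ (which follows at once by Fubini from $K_0(u)=\int_0^\infty e^{-u\cosh\theta}\,d\theta$ and $\int_0^\infty \mathrm{sech}\,\theta\, d\theta = \pi/2$) and control the tail using the pointwise bound $K_0(u)\leq e^{-u}\sqrt{\pi/(2u)}$; the latter is derived from the representation \eqref{bessel0} by replacing $\sqrt{s(s+2u)}$ with $\sqrt{2us}$, exactly as in the proof of the upper estimate in Lemma \ref{bome}. This yields $\int_m^\infty K_0(u)\,du \leq e^{-m}\sqrt{\pi/(2m)}$, a quantity that is truly minuscule once $m>18$, so $\int_0^m K_0(u)\,du \geq \pi/2 - \eta$ with $\eta$ negligible. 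Substituting back produces $\ome_0(p)^{-1} h(x) \geq (1-2\eta/\pi)/(18m)\cdot \braket{x}^{-2}$, comfortably better than the stated $1/(25m)\braket{x}^{-2}$.

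The only concern is purely numerical: the target constant $1/25$ leaves only a modest margin above the heuristic $1/18$ arising from the localization radius $1$, so the $K_0$-tail at $m>18$ must indeed be small enough to preserve that margin. The exponential decay of $K_0$ makes this step essentially automatic, so I anticipate no genuine obstacle beyond careful tracking of constants.
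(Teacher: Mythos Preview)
Your argument is correct. It differs from the paper's in a natural way. The paper integrates over all of $\RR$: it inserts the \emph{lower} bound on $\widehat{\ome_0^{-1}}$ from Lemma~\ref{bome}, factors out $\braket{x}^{-2}$ via the uniform estimate $\inf_{x}\frac{1+x^2}{1+(x+y)^2}\geq(1+|y|)^{-2}$, and then bounds the remaining one-dimensional integral numerically (this is where the condition $m>18$ enters). Your route instead localizes the convolution to $|y-x|\leq 1$, extracts $\braket{x}^{-2}$ by the simple comparison $1+y^2\leq 3(1+x^2)$ on that window, and then appeals to the exact value $\int_0^\infty K_0(u)\,du=\pi/2$ together with the \emph{upper} bound on $K_0$ from Lemma~\ref{bome} to dismiss the tail. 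In particular you never use the lower kernel bound in Lemma~\ref{bome}, and you obtain the cleaner leading constant $1/18$. The paper's approach, on the other hand, requires no localization choice and would adapt more mechanically to functions $h$ with a different decay exponent, since the infimum trick automatically produces the right power of $\braket{x}$.
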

\begin{proof}
By Lemmas \ref{bh} and \ref{bome}, we have
\begin{align}
 \ome_0(p)^{-1} h(x)
 & \geq \frac{1-e^{-1}}{\pi} \int_\RR \frac{e^{-m|x-y|}}{\sqrt{1+2m|x-y|}}  \frac{1}{6}\frac{1}{1+y^2} dy  \nonumber \\
 & = \frac{1-e^{-1}}{6\pi} \frac{1}{1+x^2} \int_\RR \frac{e^{-m|y|}}{\sqrt{1+2m|y|}}  \frac{1+x^2}{1+(x+y)^2} dy \nonumber.
  \label{lbint}
\end{align}
We estimate the integral by using that
 $  \inf_{x\in\RR}\frac{1+x^2}{1+(x+y)^2} = \frac{2}{2+y^2+\sqrt{4y^2+y^4}}
    \geq \frac{1}{(1+|y|)^2}$,
and
\begin{align*}
 \int_\RR \frac{e^{-m|y|}}{\sqrt{1+2m|y|}} \frac{1}{(1+|y|)^2} dy
 & \geq \frac{2}{m} \int_0^\infty \frac{e^{-s}}{\sqrt{1+2s}(1+(s/18))^2} ds \geq \frac{6}{5m}.
\end{align*}
Thus finally we have
\begin{align*}
 \ome_0(p)^{-1} h(x)
 \geq  \frac{1-e^{-1}}{6\pi} \frac{6}{5m} \frac{1}{1+x^2},
\end{align*}
which proves the claim.
\end{proof}

\begin{lemma}{\label{ubomeh}}
If $m>0$, then  for all $x \in \RR$,
\begin{align}
   \ome_0(p)^{-1} h(x) \leq
  \frac{3+4m^2}{\sqrt{2}m^3}\braket{x}^{-2} .
\end{align}
\end{lemma}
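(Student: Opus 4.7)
The plan is to mirror the argument of Lemma \ref{lbomeh}, but use the \emph{upper} bound for $\widehat{\ome_0^{-1}}$ from Lemma \ref{bome}. Starting from the convolution representation
\[
\ome_0(p)^{-1} h(x) \leq \frac{1}{\sqrt{2\pi m}} \int_\RR \frac{e^{-m|x-y|}}{\sqrt{|x-y|}} h(y)\, dy,
\]
I would multiply through by $\braket{x}^2$, pull the factor inside the integral, and reduce the problem to the sharp pointwise bound
\[
\braket{x}^2 h(y) \leq 2\bigl(1+(x-y)^2\bigr), \qquad x, y \in \RR. \quad (*)
\]

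To verify $(*)$, clear denominators using $h(y) = 1/(1+g(y)^2)$ to obtain the equivalent statement $-(x-2y)^2 + 2y^2 - 1 \leq 2(1+(x-y)^2) g(y)^2$. For each fixed $y$, the left-hand side attains its maximum $2y^2 - 1$ at $x = 2y$, while the right-hand side attains its minimum $2g(y)^2$ at $x = y$. Hence $(*)$ follows once the global bound $g(y)^2 \geq y^2 - 1/2$ is established. For $|y| \leq 1/\sqrt{2}$ the conclusion is trivial since the right-hand side is non-positive; for $|y| \geq 1/2$, the triangle inequality gives $|g(y)| \geq 2|y| - 1$, and
\[
(2|y|-1)^2 - (y^2 - 1/2) = 3y^2 - 4|y| + 3/2 \geq 0,
\]
since the discriminant of this quadratic in $|y|$ equals $16 - 18 < 0$.

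With $(*)$ in hand, the proof finishes by the substitution $z = x - y$ and a direct Gamma-function computation:
\[
\braket{x}^2 \ome_0(p)^{-1} h(x)
\leq \frac{2}{\sqrt{2\pi m}} \int_\RR \frac{e^{-m|z|}}{\sqrt{|z|}} (1+z^2)\, dz
= \frac{4}{\sqrt{2\pi m}} \left( \frac{\Gamma(1/2)}{\sqrt{m}} + \frac{\Gamma(5/2)}{m^{5/2}} \right),
\]
which simplifies to $(3 + 4m^2)/(\sqrt{2}\, m^3)$ using $\Gamma(1/2) = \sqrt{\pi}$ and $\Gamma(5/2) = 3\sqrt{\pi}/4$.

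The main obstacle is the sharpness of the constant in $(*)$. A naive application of the Peetre-type bound $\braket{x}^2 \leq 2 \braket{x-y}^2 \braket{y}^2$ together with $\braket{y}^2 h(y) \leq 3/2$ (from Lemma \ref{bh}) would produce the constant $3$ rather than $2$, overshooting the claimed bound by a factor of $3/2$. The improvement comes from exploiting that the specific function $g(y) = 2y - \sin 2y$ satisfies the stronger global lower bound $g(y)^2 \geq y^2 - 1/2$, reflecting the fact that the $\sin 2y$ oscillation only lowers the magnitude of $2y$ by a bounded amount.
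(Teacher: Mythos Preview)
Your proof is correct and follows essentially the same strategy as the paper: write the convolution with the kernel bound from Lemma~\ref{bome}, establish a pointwise inequality of the form $(1+x^2)h(y)\leq 2(1+(x-y)^2)$, and evaluate the resulting Gamma integral. The only difference is in how that pointwise inequality is obtained: the paper first replaces $h(y)$ by the upper bound $1/(y^2+2/3)$ from Lemma~\ref{bh} and then bounds $\sup_{x}\frac{1+x^2}{(x+y)^2+2/3}\leq 2y^2+2$, whereas you keep $h=1/(1+g^2)$ and prove the auxiliary estimate $g(y)^2\geq y^2-1/2$ directly; both routes collapse to the same integral and constant.
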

\begin{proof}
Lemmas \ref{bh} and \ref{bome} imply
\begin{align}
 \ome_0(p)^{-1} h(x)
 & \leq \frac{1}{\sqrt{2\pi}} \int_\RR \frac{ e^{-m|x-y|} }{ \sqrt{m|x-y|} } \frac{1}{y^2+2/3} dy  \nonumber \\
 & \leq \frac{1}{\sqrt{2\pi}} \frac{1}{1+x^2}
   \int_\RR \frac{ e^{-m|y|} }{ \sqrt{m|y|} } \bigg(\sup_{x\in\RR} \frac{1+x^2}{(x+y)^2+2/3} \bigg) dy.
\end{align}
Since the supremum in the integral is further bounded by $ 2y^2+2$, we get
\begin{align*}
 \ome_0(p)^{-1} h(x)
 & \leq \frac{1}{\sqrt{2\pi}} \frac{1}{1+x^2}
        \int_\RR \frac{ e^{-m|y|} }{ \sqrt{m|y|} } (2y^2+2) dy
 = \frac{3+4m^2}{\sqrt{2}m^3} \frac{1}{1+x^2}.
\end{align*}
\end{proof}

\begin{lemma}{\label{ub p2sqh}}
If $m>10$, we have  for all $x \in \RR$ that
\begin{align}
 \left|\ome_0(p)^{-1} p^2 h(x) \right|  \leq \frac{700}{m}\braket{x}^{-3}.
\end{align}
\end{lemma}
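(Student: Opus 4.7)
The proof plan is to write $\ome_0(p)^{-1}p^2 h$ as a convolution with the kernel $\widehat{\ome_0^{-1}}$ and reduce the bound to the pointwise estimates already at hand. Since $p^2$ acts as $-d^2/dx^2$,
\[
\ome_0(p)^{-1}p^2 h(x) \;=\; -\frac{1}{\sqrt{2\pi}}\int_\RR \widehat{\ome_0^{-1}}(x-y)\,h''(y)\,dy.
\]
Lemma \ref{bome} gives $|\widehat{\ome_0^{-1}}(x-y)|\leq e^{-m|x-y|}/\sqrt{m|x-y|}$, while Lemma \ref{bh} together with the elementary inequality $h(y)\leq 1/(y^2+2/3)\leq \tfrac{3}{2}\braket{y}^{-2}$ yields $|h''(y)|\leq 120\,h(y)^{3/2}\leq 120\,(3/2)^{3/2}\braket{y}^{-3}$. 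Substituting both bounds gives
\[
|\ome_0(p)^{-1}p^2 h(x)| \;\leq\; \frac{120\,(3/2)^{3/2}}{\sqrt{2\pi}}\int_\RR \frac{e^{-m|x-y|}}{\sqrt{m|x-y|}}\,\braket{y}^{-3}\,dy,
\]
and the remaining task is to bound this last integral by a constant multiple of $\braket{x}^{-3}/m$.

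To pull the $\braket{x}^{-3}$ factor out I would follow the scheme of Lemma \ref{ubomeh}, adapted to the stronger decay $\braket{y}^{-3}$. The cleanest route is to split the integration region at $|y|=|x|/2$. On $\{|y|\geq |x|/2\}$, $\braket{y}\geq \tfrac{1}{2}\braket{x}$ gives $\braket{y}^{-3}\leq 8\braket{x}^{-3}$ with no Peetre-type loss, and the remaining kernel integrates via $s=m|x-y|$ to $2\sqrt{\pi}/m$. On $\{|y|<|x|/2\}$, the lower bound $|x-y|\geq |x|/2$ yields $e^{-m|x-y|}/\sqrt{m|x-y|}\leq \sqrt{2/(m|x|)}\,e^{-m|x|/2}$; combined with the finiteness of $\int_\RR |h''(y)|\,dy$ (clear from Lemma \ref{bh} and $\braket{y}^{-3}\in L^1$), this contribution has super-polynomial decay in $|x|$ and is absorbed into the constant for $m>10$.

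The main obstacle will be keeping the numerical constant at or below $700$. A crude route combining a global Peetre-type inequality $\braket{x}\leq \sqrt{2}\,\braket{x-z}\braket{z}$ with $\braket{z}^3\leq 2\sqrt{2}(1+|z|^3)$ and one-variable Gamma integrals loses two factors of $2\sqrt{2}$ and, together with the prefactor $120(3/2)^{3/2}/\sqrt{2\pi}\approx 88$, overshoots the target. The region-splitting above avoids Peetre on the inner region altogether, and on the complement the exponential decay $e^{-m|x|/2}$ easily absorbs the leftover constants. If further sharpening is needed, one can exploit the structural expression $h''=-8(g')^2 h^3+6(g')^2 h^2-2gg''h^2$ with $g'=4\sin^2 x$: since $g(0)=g'(0)=g''(0)=0$, all three terms vanish at the origin and the crude pointwise bound $120\,h^{3/2}$ is substantially loose on a neighbourhood of $x=0$, which can be used to trim the final prefactor below $700/m$.
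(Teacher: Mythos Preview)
Your setup is identical to the paper's: write $\ome_0(p)^{-1}p^2h$ as the convolution of $\widehat{\ome_0^{-1}}$ with $-h''$, insert the kernel bound from Lemma~\ref{bome} and $|h''|\le 120\,h^{3/2}$ from Lemma~\ref{bh}. The divergence is in how you extract the $\braket{x}^{-3}$ factor, and here your splitting at $|y|=|x|/2$ does not deliver the constant $700$. On the region $\{|y|\ge|x|/2\}$ alone your bounds give
\[
\frac{120\,(3/2)^{3/2}}{\sqrt{2\pi}}\cdot 8\cdot\frac{2\sqrt{\pi}}{m}\,\braket{x}^{-3}
=\frac{120\cdot(3/2)^{3/2}\cdot 16}{\sqrt{2}\,m}\,\braket{x}^{-3}\approx\frac{2494}{m}\,\braket{x}^{-3},
\]
already more than three times the target, and the complementary region only adds to this. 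The loss comes from the crude factor $8$ in $\braket{y}^{-3}\le 8\braket{x}^{-3}$ together with the extra $(3/2)^{3/2}$ you incur by passing from $h(y)\le (y^2+\tfrac{2}{3})^{-1}$ to $(3/2)\braket{y}^{-2}$. Your suggested refinements (vanishing of $h''$ near the origin, etc.) are not concrete enough to recover a factor of~$3.5$.

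The paper does \emph{not} split regions. It keeps the sharper bound $h(y)^{3/2}\le\big((x+y)^2+\tfrac{2}{3}\big)^{-3/2}$ (after the shift $y\mapsto x+y$), pulls out $(1+x^2)^{-3/2}$ by multiplying and dividing, and controls the ratio by the same supremum used in Lemma~\ref{ubomeh}:
\[
\sup_{x\in\RR}\frac{1+x^2}{(x+y)^2+\tfrac{2}{3}}\le 2y^2+2,
\qquad\text{hence}\qquad
\Big(\cdots\Big)^{3/2}\le (2y^2+2)^{3/2}\le (2y^2+2)^{2}.
\]
The assumption $m>10$ enters only now: after the substitution $s=m|y|$ one has $|y|\le s/10$, so $(2y^2+2)^2\le(2(s/10)^2+2)^2$. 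The resulting Gamma integral evaluates to a number just below $700$. In short, the route you mention in passing (``follow the scheme of Lemma~\ref{ubomeh}'') is precisely what the paper does and is what makes the constant work; the dyadic split you actually pursue loses too much.
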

\begin{proof}
By Lemma \ref{bh} and \ref{bome}, it follows that
\begin{align}
\left|\ome_0(p)^{-1} p^2 h(x) \right|
&= \left|\ome_0(p)^{-1}h''(x) \right| \leq \frac{120}{\sqrt{2\pi}} \int_\RR \frac{ e^{-m|x-y|} }{ \sqrt{m|x-y|} } h(x)^{3/2} dy  \nonumber \\
&\leq \frac{120}{\sqrt{2\pi}} \int_\RR \frac{ e^{-m|y|} }{ \sqrt{m|y|} } \frac{1}{((x+y)^2+\frac{2}{3})^{3/2}} dy  \nonumber \\
&\leq \frac{120}{\sqrt{2\pi}} \frac{1}{(1+x^2)^{3/2}} \int_\RR \frac{ e^{-m|y|} }{ \sqrt{m|y|} }
       \sup_{x\in\RR} \bigg( \frac{1+x^2}{(x+y)^2+\frac{2}{3}} \bigg)^{3/2} dy \nonumber \\
&\leq \frac{240}{\sqrt{2\pi}} \frac{1}{m} \frac{1}{(1+x^2)^{3/2}} \int_0^\infty \frac{ e^{-s} }{ \sqrt{s} }
       (2(s/10)^2+2)^2 ds,
\end{align}
where in the last inequality we used the assumption that $m>10$. By computing the integral, the claim follows.
\end{proof}


\begin{lemma}{\label{lb of ome h}}
If $m\geq 146$, we have for all $x \in \RR$,
\begin{align}
   \ome_0(p) h(x) \geq \braket{x}^{-2}.
\end{align}
\end{lemma}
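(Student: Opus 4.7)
The natural approach is to reduce the lower bound on $\omega_0(p)h$ to the two estimates already established, namely the lower bound on $\omega_0(p)^{-1}h$ in Lemma \ref{lbomeh} and the upper bound on $\omega_0(p)^{-1}p^2 h$ in Lemma \ref{ub p2sqh}. The key observation is the trivial functional-calculus identity
\begin{align*}
  \omega_0(p) \;=\; \omega_0(p)^{-1}\bigl(p^2 + m^2\bigr) \;=\; m^2\,\omega_0(p)^{-1} \;+\; \omega_0(p)^{-1}p^2,
\end{align*}
which, when applied to $h$ (valid since $h \in \cap_n D(p^n)$ by Lemma \ref{prop of h}), turns the quantity of interest into a combination of terms whose size we already control.

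The plan is to write
\begin{align*}
  \omega_0(p) h(x) \;=\; m^2\,\omega_0(p)^{-1} h(x) \;+\; \omega_0(p)^{-1} p^2 h(x),
\end{align*}
bound the first term from below using Lemma \ref{lbomeh}, and bound the second in absolute value from above using Lemma \ref{ub p2sqh} together with the trivial estimate $\langle x\rangle^{-3}\leq \langle x\rangle^{-2}$. This produces
\begin{align*}
  \omega_0(p) h(x) \;\geq\; \frac{m^2}{25m}\langle x\rangle^{-2} \;-\; \frac{700}{m}\langle x\rangle^{-2}
  \;=\; \Bigl(\frac{m}{25} \;-\; \frac{700}{m}\Bigr)\langle x\rangle^{-2}.
\end{align*}
It then remains to verify that the bracketed coefficient is at least $1$ for $m\geq 146$, which amounts to checking $m^2 - 25m - 17500 \geq 0$; the positive root of this quadratic is $\tfrac12(25 + \sqrt{70625}) < 146$, so the inequality holds at $m=146$ and, being increasing in $m$, for all larger values.

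There is no real obstacle here, since the two auxiliary estimates have already done the analytic work; the only care needed is that the hypothesis $m \geq 146$ is exactly tuned so that the positive term $m/25$ dominates the negative correction $700/m$ by at least a unit margin. Note also that both input lemmas require $m > 18$ and $m > 10$ respectively, which are automatic under $m\geq 146$, so no additional hypotheses creep in.
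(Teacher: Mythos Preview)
Your proof is correct and follows essentially the same argument as the paper: both use the identity $\omega_0(p) = m^2\,\omega_0(p)^{-1} + \omega_0(p)^{-1}p^2$, invoke Lemmas \ref{lbomeh} and \ref{ub p2sqh}, and then verify that $m/25 - 700/m \geq 1$ for $m\geq 146$. The only cosmetic difference is that the paper keeps the sharper $\langle x\rangle^{-3}$ bound from Lemma \ref{ub p2sqh} before discarding it, whereas you immediately weaken it to $\langle x\rangle^{-2}$; the final inequality is the same.
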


\begin{proof}
We split up the expression like
\begin{align}
   \ome_0(p) h(x) = \frac{p^2+m^2}{\sqrt{p^2+m^2}} h(x)
   				      = \frac{m^2}{\sqrt{p^2+m^2}}h(x) + \frac{p^2}{\sqrt{p^2+m^2}}h(x).
\end{align}
By Lemmas \ref{lbomeh} and \ref{ub p2sqh}, for $m\geq 146$ we have
\begin{align}
  \sqrt{p^2+m^2}h(x)
  & \geq \frac{m}{25} (1+x^2)^{-1} - \frac{700}{m}(1+x^2)^{-3/2} \nonumber \\
  & \geq \left( \frac{m}{25} - \frac{700}{m}\right)(1+x^2)^{-1} \geq \frac{1}{1+x^2}.
\end{align}
\end{proof}

Now we turn to proving conditions (P.1) and (P.2) in Lemma \ref{alt}.
\begin{proposition}{\label{lb of f}}
If $m \geq 146$, then  for all $x \in \RR$ we have
\begin{align}
   f(x) \geq 2\braket{x}^{-2}.
\end{align}
\end{proposition}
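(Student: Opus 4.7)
The plan is to split $f$ into a term already bounded below by Lemma \ref{lb of ome h} and a manifestly non-negative correction. Specifically, I would write
\[
f(x) = 2\ome_0(p)h(x) + Ah(x), \qquad A := \ome_0(p+1)+\ome_0(p-1)-2\ome_0(p).
\]
Since Lemma \ref{lb of ome h} already gives $2\ome_0(p)h(x)\geq 2\braket{x}^{-2}$ for $m\geq 146$, the whole statement reduces to the pointwise inequality $Ah(x)\geq 0$ for every $x\in\RR$.

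Convexity of $\ome_0$ yields only positivity of the Fourier multiplier $A$, which does not immediately translate to pointwise positivity of $Ah$. To bridge this gap I would use Taylor's theorem with integral remainder at the symbol level,
\[
\ome_0(k+1)+\ome_0(k-1)-2\ome_0(k) = \int_{-1}^{1}(1-|u|)\,\ome_0''(k+u)\,du = m^2\int_{-1}^{1}\frac{1-|u|}{((k+u)^2+m^2)^{3/2}}\,du,
\]
and apply this at the operator level to obtain
\[
Ah(x) = m^2\int_{-1}^{1}(1-|u|)\bigl((p+u)^2+m^2\bigr)^{-3/2}h(x)\,du,
\]
which is legitimate by the regularity of $h$ established in Lemma \ref{prop of h}.

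Next I would invoke the classical identity that the convolution kernel of $(p^2+m^2)^{-3/2}$ equals $|z|K_1(m|z|)/(m\pi)$, hence is pointwise non-negative; conjugation by $e^{iux}$ merely multiplies this kernel by the phase $e^{-iu(x-y)}$ for the shifted resolvent $((p+u)^2+m^2)^{-3/2}$. Interchanging the $u$- and $y$-integrations by Fubini, which is legitimate since $|z|K_1(m|z|)$ is bounded near $0$, decays exponentially at infinity, and $h$ is bounded, the weight $(1-|u|)$ collapses the phase into the Fej\'er factor
\[
\int_{-1}^{1}(1-|u|)\,e^{-iu(x-y)}\,du = \frac{4\sin^2((x-y)/2)}{(x-y)^2}\geq 0,
\]
producing
\[
Ah(x) = \frac{4m}{\pi}\int_{\RR}\frac{K_1(m|x-y|)\sin^2((x-y)/2)}{|x-y|}\,h(y)\,dy \geq 0,
\]
since every factor in the integrand is non-negative. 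Combining this with Lemma \ref{lb of ome h} then yields $f(x)\geq 2\braket{x}^{-2}$.

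The main obstacle is conceptual rather than computational: operator positivity of $A$ is automatic from convexity of $\ome_0$ but does not by itself imply pointwise positivity of $Ah$ against a non-negative test function. The real insight is to read $A$ as a superposition of shifted resolvents $((p+u)^2+m^2)^{-3/2}$, whose Fourier-conjugated Bessel kernels recombine with the weight $(1-|u|)$ into a manifestly non-negative Fej\'er-type convolution kernel. Once this representation is in place, the remaining details---verifying Fubini and citing $K_1>0$---are entirely routine.
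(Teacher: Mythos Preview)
Your proof is correct and uses the same decomposition as the paper: write $f = 2\ome_0(p)h + T(p)h$ with $T(k)=\ome_0(k+1)+\ome_0(k-1)-2\ome_0(k)$, invoke Lemma~\ref{lb of ome h} for the first term, and show $T(p)h\geq 0$ by proving that $T(p)$ has a non-negative convolution kernel. The only difference is how that last fact is established. The paper observes directly that $\widehat{T}(x)=2\widehat{\ome_0}(x)(\cos x-1)$, computes $\widehat{\ome_0}(x)=(-d^2/dx^2+m^2)\widehat{\ome_0^{-1}}(x)=-\sqrt{2/\pi}\,m^2\int_0^\infty e^{-m|x|\cosh t}\sinh^2 t\,dt\leq 0$, and is done in two lines. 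Your Taylor-remainder/Fej\'er argument is a valid alternative and in fact produces the \emph{same} kernel: since $\int_0^\infty e^{-z\cosh t}\sinh^2 t\,dt = K_1(z)/z$, the paper's $\widehat{T}(x)$ equals $4\sqrt{2/\pi}\,mK_1(m|x|)\sin^2(x/2)/|x|$, which matches your final integrand. The paper's route is shorter because it exploits the shift identity $\widehat{\ome_0(\cdot\pm 1)}(x)=e^{\mp ix}\widehat{\ome_0}(x)$ at once, whereas your approach re-derives the same positivity through a second-order Taylor expansion and a Fubini step; both are clean, but the direct Fourier computation is the more economical argument here.
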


\begin{proof}
Define
\begin{align*}
 T(k) := \sqrt{(k+1)^2+m^2}+\sqrt{(k-1)^2+m^2} - 2\sqrt{k^2+m^2},
\end{align*}
and note that $T(k)\to 0$ as $k\to\infty$. Fourier transform gives
\begin{align*}
   \widehat{T}(x) = 2 \widehat{\ome_0}(x)(\cos x-1).
\end{align*}
By noting that $\ome_0(k)=(k^2+m^2)\ome_0^{-1}(k)$, from \eqref{bessel0} we have
\begin{align}
\label{K1}
 \widehat{\ome_0}(x)
 &= \bigg( -\frac{d^2}{dx^2}+m^2\bigg)\widehat{\ome_0^{-1}}(x) \nonumber \\
 &= -\sqrt{\frac{2}{\pi}} m^2 \int_0^\infty \exp(-m|x|\cosh t) \sinh^2 t dt \leq 0.
\end{align}
Hence $\widehat{T}(x)$ is non-negative and $T(p)$ is positivity preserving. Thus by the definition of $f$ and
Lemma \ref{lb of ome h} we have
\begin{align}
  f(x) & = T(p)h(x) + 2\sqrt{p^2+m^2}h(x) \nonumber \\
       &  = (\widehat{T}*h)(x) + 2\sqrt{p^2+m^2}h(x)
        \geq 2\sqrt{p^2+m^2}h(x)
         \geq \frac{2}{1+x^2}.
\end{align}
\end{proof}

We use the extra shorthands $\ome_\pm(p) := \sqrt{(p\pm 1)^2+m^2}$ and $\lambda_0:=\sqrt{1+m^2}$.
\begin{lemma}{\label{iter}}
For every $w \in D(p^2)$ and almost every $x\in \RR$ we have
\begin{align}
  |(\ome_+(p) - \lambda_0)w(x)| & \leq \ome_0^{-1}|(p^2+2p)w(x)|,  \label{rule1} \\
  |(\ome_-(p) - \lambda_0)w(x)| & \leq \ome_0^{-1}|(p^2-2p)w(x)|.   \label{rule2}
\end{align}
\end{lemma}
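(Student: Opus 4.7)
The plan is to exploit the operator identity $\ome_+(p)^2 - \lambda_0^2 = (p+1)^2 + m^2 - (1+m^2) = p^2+2p$, which by functional calculus factors as $\ome_+(p) - \lambda_0 = (\ome_+(p) + \lambda_0)^{-1}(p^2+2p)$. Writing $g := (p^2+2p)w \in L^2(\RR)$, the inequality \eqref{rule1} thus reduces to the pointwise bound $|(\ome_+(p) + \lambda_0)^{-1} g(x)| \leq \ome_0(p)^{-1}|g|(x)$.

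To establish this, I would first use the conjugation identity $e^{-ix}\ome_0(p)e^{ix} = \ome_+(p)$, which follows from $e^{-ix}pe^{ix} = p+1$ by functional calculus as in Lemma \ref{keylemma}. This yields $(\ome_+(p)+\lambda_0)^{-1}g(x) = e^{-ix}\bigl[(\ome_0(p)+\lambda_0)^{-1}(e^{i\cdot}g)\bigr](x)$. Denoting by $G$ the convolution kernel of $(\ome_0(p)+\lambda_0)^{-1}$, the triangle inequality then gives $|(\ome_+(p)+\lambda_0)^{-1}g(x)| \leq (|G| * |g|)(x)$, so everything reduces to two pointwise properties of kernels: non-negativity of $G$, and the domination $G \leq \widehat{\ome_0^{-1}}$.

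Next, I would verify both. The non-negativity of $G$ comes from the subordination identity $\frac{1}{\ome_0(k)+\lambda_0} = \int_0^\infty e^{-\lambda_0 t}\, e^{-t\ome_0(k)}\,dt$, since the inverse Fourier transform of $e^{-t\ome_0(k)}$ is the relativistic semigroup kernel, which is a non-negative function (the transition density of the free relativistic process). The domination $G \leq \widehat{\ome_0^{-1}}$ then follows from the algebraic identity $\ome_0^{-1} - (\ome_0+\lambda_0)^{-1} = \lambda_0/\bigl(\ome_0(\ome_0+\lambda_0)\bigr)$: the right-hand side is the Fourier multiplier associated with the convolution $\lambda_0\,\bigl(\widehat{\ome_0^{-1}} * G\bigr)$, which is non-negative because both factors are non-negative (the first by Lemma \ref{bome}). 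Chaining everything yields $|(\ome_+(p)+\lambda_0)^{-1}g(x)| \leq (G*|g|)(x) \leq (\widehat{\ome_0^{-1}}*|g|)(x) = \ome_0^{-1}|g|(x)$, which is \eqref{rule1}.

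For \eqref{rule2} the argument is identical after replacing $e^{-ix}$ by $e^{ix}$: one uses $\ome_-(p)^2 - \lambda_0^2 = p^2 - 2p$ and $e^{ix}\ome_0(p)e^{-ix} = \ome_-(p)$, and the same $G$ appears. The main obstacle is the passage from operator inequalities to pointwise inequalities of kernels, because an operator bound of Fourier multipliers in the spectral sense does not in general imply a pointwise bound between the corresponding integral kernels; in our setting this gap is bridged by the subordination representation together with the explicit Bessel-function form of $\widehat{\ome_0^{-1}}$ from Lemma \ref{bome}, both of which ensure non-negativity of the relevant kernels and let the domination propagate through the convolution.
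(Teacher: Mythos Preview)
Your proof is correct and follows essentially the same route as the paper: factor via $\ome_+^2-\lambda_0^2=p^2+2p$, conjugate by $e^{\pm ix}$ to reduce to $(\ome_0+\lambda_0)^{-1}$, use positivity preservation of the latter, and then dominate $(\ome_0+\lambda_0)^{-1}$ by $\ome_0^{-1}$ via the algebraic identity $\ome_0^{-1}-(\ome_0+\lambda_0)^{-1}=\lambda_0\,\ome_0^{-1}(\ome_0+\lambda_0)^{-1}$. The only difference is that the paper simply asserts $(\ome_0+\lambda_0)^{-1}$ is positivity preserving and works at the operator level on non-negative functions, whereas you unpack this at the kernel level and supply a justification (the subordination integral $\int_0^\infty e^{-\lambda_0 t}e^{-t\ome_0}\,dt$); this is a welcome addition of detail rather than a genuinely different argument.
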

\begin{proof}
Let $w \in D(p^2)$. Then
\begin{align}
   |(\ome_+(p) - \lambda_0)w(x)|
   &=  |(\ome_+(p) + \lambda_0)^{-1} (\ome_+(p)^2 - \lambda_0^2)w(x)| \nonumber  \\
   &=  |(\ome_+(p) + \lambda_0)^{-1} (p^2+2p)w(x)| \nonumber \\
   &=  |e^{-ix}(\ome_0 + \lambda_0)^{-1}e^{ix} (p^2+2p)w(x)| \nonumber \\
   &\leq (\ome_0 + \lambda_0)^{-1}|(p^2+2p)w(x)|, \label{362}
\end{align}
where in the last step we used that $(\ome_0+\lambda_0)^{-1}$ is positivity preserving.
Moreover, since
also $\ome_0^{-1}$ is positivity preserving, for any non-negative function $s(x)$ we have
\begin{align}
 0\leq \frac{1}{ \ome_0 +\lambda_0 } s(x)
 = \ome_0^{-1} s(x) - \frac{\lambda_0}{\ome_0+\lambda_0} \ome_0^{-1} s(x)
 \leq \ome_0^{-1}s(x).  \label{363}
\end{align}
From \eqref{362}-\eqref{363} we obtain \eqref{rule1}. The estimate \eqref{rule2} can be shown similarly.
\end{proof}

\begin{proposition}{\label{f decay}}
If $m\geq 146$, then for $|x|\to \infty$ we have that
\begin{align*}
   \frac{1}{f(x)} \left| \left(\sqrt{(p+1)^2+m^2} - \sqrt{1+m^2}\right)f(x) \right| = O(|x|^{-1}).
\end{align*}
\end{proposition}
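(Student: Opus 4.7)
The plan is to prove $|(\ome_+(p)-\lambda_0)f(x)| = O(\langle x\rangle^{-3})$ and then divide by the lower bound $f(x)\geq 2\langle x\rangle^{-2}$ from Proposition \ref{lb of f} to obtain the claimed $O(|x|^{-1})$ decay. Applying Lemma \ref{iter} directly to $(\ome_+-\lambda_0)f$ produces $\ome_0^{-1}|(p^2+2p)f(x)|$, but since $f=(\ome_++\ome_-)h$ the surviving nonlocal factor $\ome_++\ome_-$ inside $f$ obstructs a clean pointwise estimate. The trick is an algebraic rearrangement that pushes the nonlocal operators off $f$ and onto $h$: expanding $(\ome_+-\lambda_0)(\ome_++\ome_-)$ and using $\ome_+^2 = (p+1)^2+m^2 = p^2+2p+\lambda_0^2$, one checks directly that
\[
   (\ome_+-\lambda_0)\,f \;=\; (p^2+2p)\,h \;+\; (\ome_+-\lambda_0)(\ome_--\lambda_0)\,h.
\]

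The first summand is controlled pointwise by Lemma \ref{bh}: $|(p^2+2p)h(x)| \leq |h''(x)|+2|h'(x)| \leq 136\,h(x)^{3/2} = O(\langle x\rangle^{-3})$. For the second summand I iterate Lemma \ref{iter} twice. With $w = (\ome_--\lambda_0)h$, inequality \eqref{rule1} and the commutativity of $p^2+2p$ with $\ome_-$ give
\[
   |(\ome_+-\lambda_0)(\ome_--\lambda_0)h(x)| \;\leq\; \ome_0^{-1}\bigl|(\ome_--\lambda_0)(p^2+2p)h(x)\bigr|,
\]
and applying inequality \eqref{rule2} to $w'=(p^2+2p)h$ together with the algebraic identity $(p^2-2p)(p^2+2p)=p^4-4p^2$ yields
\[
   |(\ome_--\lambda_0)(p^2+2p)h(x)| \;\leq\; \ome_0^{-1}\bigl|h^{(4)}(x)+4h''(x)\bigr|.
\]
Chaining these, and using that $\ome_0^{-1}$ is positivity preserving so that $\ome_0^{-1}\circ\ome_0^{-1}=\ome_0^{-2}$ on non-negative integrands, produces the bound $\ome_0^{-2}|h^{(4)}+4h''|(x)$. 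Lemma \ref{prop of h} delivers $|h^{(4)}(x)+4h''(x)| \leq C\langle x\rangle^{-3}$, and $\ome_0^{-2}$ is convolution with a kernel proportional to $e^{-m|x|}$; splitting the convolution integral over $\{|y-x|\leq |x|/2\}$ (where $\langle y\rangle$ is comparable to $\langle x\rangle$) and its complement (where the exponential kernel beats any polynomial) shows $\ome_0^{-2}|h^{(4)}+4h''|(x) = O(\langle x\rangle^{-3})$, exactly as in the bookkeeping of Lemmas \ref{ubomeh}--\ref{ub p2sqh}.

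Combining the two summands gives $|(\ome_+-\lambda_0)f(x)| = O(\langle x\rangle^{-3})$, and division by $f(x)\geq 2\langle x\rangle^{-2}$ finishes the proof. The main obstacle is locating the algebraic identity at the start; without it, a single application of Lemma \ref{iter} leaves a nonlocal factor applied to $f$ that is not amenable to the pointwise methods developed in the preceding lemmas. With the identity in hand, the remainder is a mechanical double iteration of Lemma \ref{iter} followed by the same exponentially-weighted convolution estimates used throughout Section 3.1. The hypothesis $m\geq 146$ enters only through Proposition \ref{lb of f}; the upper-bound steps above are valid for every $m>0$, with constants depending on $m$.
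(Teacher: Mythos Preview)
Your proof is correct and follows essentially the same strategy as the paper: iterate Lemma~\ref{iter} twice to reduce $(\ome_+-\lambda_0)f$ to expressions in derivatives of $h$ up to order four, bound those by $C\langle x\rangle^{-3}$ via Lemma~\ref{prop of h}, use that $\ome_0^{-1}$ (hence $\ome_0^{-2}$) preserves the decay rate $\langle x\rangle^{-3}$ by the convolution estimates of Lemmas~\ref{ubomeh}--\ref{ub p2sqh}, and finally divide by the lower bound from Proposition~\ref{lb of f}.

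The only organisational difference is your opening move. You start from the exact identity
\[
(\ome_+-\lambda_0)f \;=\; (p^2+2p)h \;+\; (\ome_+-\lambda_0)(\ome_--\lambda_0)h,
\]
which isolates the leading local term $(p^2+2p)h$ before any estimation. The paper instead applies \eqref{rule1} directly to $f$ to get $\ome_0^{-1}|(p^2+2p)f|$, commutes $p^2+2p$ through $\ome_++\ome_-$, splits $\ome_++\ome_-=(\ome_+-\lambda_0)+(\ome_--\lambda_0)+2\lambda_0$, and only then applies Lemma~\ref{iter} a second time. Your identity saves one outer application of $\ome_0^{-1}$ on the $(p^2+2p)h$ piece and is marginally cleaner, but the two arguments use the same ingredients and the same final bounds.
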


\begin{proof}
We apply Lemma \ref{iter} to $w(x) = f(x) = (\ome_+(p)+\ome_-(p))h(x)$. Thus
\begin{align}
 & \hspace{-2em} |(\ome_+(p)-\lambda_0)f(x)| \nonumber \\
 & \leq  \ome_0^{-1} |(p^2+2p)f(x) | \nonumber \\
 & \leq  \ome_0^{-1} |(\ome_+ + \ome_-)(p^2+2p)h(x)| \nonumber \\
 & \leq  \ome_0^{-1} \Big( |(\ome_+ -\lambda_0)(p^2+2p)h(x)| + |(\ome_- -\lambda_0)(p^2+2p)h(x)| \notag \nonumber \\
 & \quad + \lambda_0 |(p^2+2p)h(x)| \Big) \nonumber \\
 & \leq  \ome_0^{-1} \Big( \ome_0^{-1} |(p^2+2p)(p^2+2p)h(x)| + \ome_0^{-1}|(p^2-2p)(p^2+2p)h(x)| \notag \nonumber \\
 & \quad + \lambda_0 |(p^2+2p)h(x)| \Big) \nonumber \\
 & = \ome_0^{-2} \Big( |(p^2+2p)^2 h(x)| + |(p^4-4p^2)h(x)| \Big) + \lambda_0 \ome_0^{-1}|(p^2+2p)h(x)|, \label{370}
\end{align}
using \eqref{rule1}-\eqref{rule2} in the last inequality. From Lemma \ref{bh} we know that
\begin{align*}
 |h^{(3)}(x)| \leq C \braket{x}^{-3} \quad \mbox{and} \quad |h^{(4)}(x)| \leq C \braket{x}^{-3},
\end{align*}
where $C=\max\{C_1,C_2\}$. Thus
\begin{align}
\mbox{r.h.s.} \, \eqref{370} \leq C' \ome_0^{-2} \braket{x}^{-3} + \lambda_0 C'\ome_0^{-1} \braket{x}^{-3},
\end{align}
with some $C'>0$. As shown in the proof of Lemma \ref{ub p2sqh}, there exists a constant $D_1>0$ such that
$\ome_0^{-1} \braket{x}^{-3} < D_1 \braket{x}^{-3}$. This implies that there exists $D_2>0$ such that
\begin{align}
\mbox{r.h.s.} \, \eqref{370} \leq D_2 \braket{x}^{-3}.
\end{align}
Hence, by Proposition \ref{lb of f} we have
\begin{align*}
   \frac{|(\ome_+(p)-\lambda_0)f(x)|}{f(x)} \leq \frac{D_2}{2}\braket{x}^{-1}
\end{align*}
i.e., of the order $O(|x|^{-1})$.
\end{proof}


\subsection{Proof of Corollary \ref{thm3d}}
Let
$$
L_{\rm r}^2(\RR^3):=\{f\in L^2(\RR^3) : \, f(x) = f(|x|), \, x\in\RR^3 \} \subset L^2(\RR^3)
$$
be the closed subspace of rotationally invariant square integrable functions on $\RR^3$. Consider the unitary transform
\begin{align}
  U : f(|x|) \mapsto \sqrt{4\pi} r f(r), \quad r\in \RR^+,
\end{align}
from $L_{\rm r}^2(\RR^3)$ to $L^2(\RR^+)$. Using that $-\Delta$ on $L_{\rm r}^2(\RR^3)$ has the form $D_{\rm r} :=
-r^{-2}(d/dr)r^2(d/dr)$, we have
\begin{align}
   U D_{\rm r} U^* = -\frac{d^2}{dr^2},
\end{align}
i.e., the Laplacian in one dimension on $[0,\infty)$, with Dirichlet boundary condition at $0$. Note that $(Uv)(r) =
u(r)$, $r\in\RR^+$. Since $u(r)$ is smooth and odd (Theorem \ref{thm1}) in $r\in\RR$, we have $u \in D(-d^2/dr^2)$.
Moreover, since $V$ is bounded, it follows that $D(H_{\rm r})=D((-\Delta+m^2)^{1/2}) \supset D(-\Delta)$. Thus, $v
\in D(H_{\rm r})$ and the equality
\begin{align}
  (\sqrt{-\Delta+m^2}\, v)(x) &= (U^*U\sqrt{-\Delta+m^2}\, U^* Uv)(x) \nonumber \\
                             &=  (U^*\sqrt{-d^2/dr^2+m^2} u)(x)   \label{h3u}
\end{align}
holds. By Theorem \ref{thm1}, the right hand side of \eqref{h3u} equals
\begin{align}
   U^*(\lambda-V(r))u)(x) = (\lambda-W(x))v(x).
\end{align}
Hence $v$ satisfies the eigenvalue equation $H_{\rm r} v=\lambda v$.


\subsection{Proof of Theorem \ref{ThmMT} }
From the definition of $\tilde h$, we can show the following result directly.
\begin{lemma}{\label{lubound}}
  $\tilde h \in D(p^3)$ and the estimates
  \begin{align}
     & c_1 \braket{x}^{-1} \leq \tilde h(x) \leq c_2\braket{x}^{-1}, \qquad x\in\RR\\
     & |\tilde h^{(j)} (x)| \leq c_3 \braket{x}^{-2}, \qquad j=1,2,3,
  \end{align}
hold  with  $c_1 = 0.26$, $c_2 = 1.02$, $c_3 = 2.2$.
\end{lemma}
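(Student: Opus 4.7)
The plan is to establish the three assertions of the lemma in sequence: the pointwise bounds on $\tilde h$, then the pointwise bounds on $\tilde h^{(j)}$ for $j=1,2,3$, and finally $\tilde h\in D(p^3)$. For the first part, I would start from the identity $1+g(|x|) = 1+2|x|-\sin(2|x|)$ and use $-1\le -\sin(2|x|)\le 1$ (refined near the origin by $|\sin(2|x|)|\le 2|x|$) to sandwich $1+g(|x|)$ between roughly $2|x|$ and $2|x|+2$; comparing reciprocals with $\braket{x}^{-1}$ separately on $|x|\le 1$ and $|x|\ge 1$ and optimising yields the constants $c_1$ and $c_2$. Note that $\tilde h(x)\braket{x}$ equals $1$ at $x=0$ and tends to $1/2$ as $|x|\to\infty$, so its extrema sit in a bounded region, which is what makes the constants numerically harmless.

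For the derivative bounds I would write $\tilde h(x)=H(|x|)$ with $H(y)=(1+g(y))^{-1}$ and differentiate to obtain
\begin{align*}
H'(y)   &= -g'(y)\, H(y)^{2}, \\
H''(y)  &= -g''(y)\, H(y)^{2} + 2 g'(y)^{2}\, H(y)^{3}, \\
H'''(y) &= -g'''(y)\, H(y)^{2} + 6\, g'(y)\, g''(y)\, H(y)^{3} - 6\, g'(y)^{3}\, H(y)^{4},
\end{align*}
where $g'(y)=4\sin^{2}y$, $g''(y)=4\sin(2y)$, $g'''(y)=8\cos(2y)$ are uniformly bounded. Combining these identities with $H(y)\le c_{2}\braket{y}^{-1}$ gives $|H^{(j)}(y)|\le C\braket{y}^{-2}$; tightening the estimates near $y=0$ via $|\sin y|\le\min(|y|,1)$ keeps the overall constant small enough to reach $c_3$. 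The chain rule then gives $\tilde h^{(j)}(x) = (\operatorname{sgn} x)^{j}\, H^{(j)}(|x|)$ for $x\ne 0$, so the same bound transfers to $|\tilde h^{(j)}(x)|$.

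Finally, for $\tilde h\in D(p^{3}) = H^{3}(\RR)$, the decay bounds above already place $\tilde h$ and its almost everywhere pointwise derivatives of order up to three in $L^{2}(\RR)$. The delicate point is that $g'''(0)=8\ne 0$, so $H'''(0)\ne 0$ and $\tilde h'''(x) = (\operatorname{sgn} x)\,H'''(|x|)$ has a bounded jump discontinuity at the origin. However, since $g'(0)=g''(0)=0$ one has $H'(0)=H''(0)=0$, so $\tilde h$, $\tilde h'$ and $\tilde h''$ are all continuous on $\RR$; in particular $\tilde h''$ is absolutely continuous, and its distributional derivative coincides with the classical one almost everywhere, with no Dirac contribution at $0$. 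Hence the distributional third derivative of $\tilde h$ equals the pointwise $\tilde h'''$ almost everywhere and lies in $L^{2}(\RR)$, giving $\tilde h\in D(p^{3})$. The main obstacle throughout is controlling this cusp at the origin; the vanishing of $g$, $g'$ and $g''$ at $0$ is precisely what confines the loss of smoothness to $\tilde h'''$, where a bounded jump is harmless for $L^{2}$-membership.
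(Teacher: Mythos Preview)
Your proposal is correct and is precisely the direct verification the paper has in mind: the paper gives no proof beyond the sentence ``From the definition of $\tilde h$, we can show the following result directly,'' and your argument is the natural way to carry this out. One small tightening: the inference ``$\tilde h''$ is continuous, in particular absolutely continuous'' is not valid as stated; what you actually use is that $\tilde h''$ is $C^1$ on $\RR\setminus\{0\}$ with bounded one-sided derivatives at $0$, hence Lipschitz, hence absolutely continuous---and then the distributional third derivative is the pointwise one almost everywhere, with no Dirac mass.
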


\begin{lemma}{\label{altMT}}
 If
 \begin{itemize}
    \item[(Q.1)] there exists a constant $C>0$ such that $\tilde f(x)\geq C\braket{x}^{-1}$
                  for all $x \in \RR$,
    \item[(Q.2)] $(\ome(p+1)-\lambda)\tilde f(x)= O(\braket{x}^{-2})$,
 \end{itemize}
 then Theorem \ref{ThmMT} follows.
\end{lemma}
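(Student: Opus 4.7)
The plan is to parallel the proof of Lemma \ref{alt}, adapting to the weaker regularity $\tilde h \in D(p^3)$ (in place of $h \in C^\infty$) so that the conclusion is continuity of $\tilde V$ rather than smoothness. First, I would reproduce the shift identity of Lemma \ref{keylemma} for the tilde objects, which uses only $e^{-ix}\omega(p)e^{ix}=\omega(p+1)$ and is justified because $\tilde h \in D(p^3)$ gives $\tilde f \in D(p^2) \subset D(\omega(p+1))$. This yields
\begin{equation*}
\omega(p)\tilde u(x) = \lambda\tilde u(x) + \sin(x)(\omega(p+1)-\lambda)\tilde f(x) - 2e^{-ix}\tilde h'(x).
\end{equation*}
A direct computation from $\tilde h(x) = 1/(1+g(|x|))$ gives $\tilde h'(x) = -\mathrm{sign}(x)\cdot 4\sin^2(x)\,\tilde h(x)^2$, so dividing the identity $\tilde V\tilde u = (\lambda-\omega(p))\tilde u$ by $\tilde u = \tilde f\sin x$ produces
\begin{equation*}
\tilde V(x) = -\frac{(\omega(p+1)-\lambda)\tilde f(x)}{\tilde f(x)} - \frac{8 e^{-ix}\mathrm{sign}(x)\sin(x)\,\tilde h(x)^2}{\tilde f(x)},
\end{equation*}
with the apparent poles at $x = n\pi$ cancelled by the factor $\sin^2 x$ descending from $\tilde h'$.

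Next I would use (Q.1) to establish continuity of $\tilde V$. Since $\tilde f(x) \geq C\braket{x}^{-1}$, both denominators are non-vanishing and $1/\tilde f(x) \leq C^{-1}\braket{x}$. The second summand is continuous because $\mathrm{sign}(x)\sin x$ has matching one-sided limits equal to zero at $x=0$, $\tilde h$ is continuous by Lemma \ref{lubound}, and $\tilde f \in H^2(\RR) \hookrightarrow C(\RR)$ (using $\tilde h \in D(p^3)$). For the first summand I would invoke the identity $(\omega(p+1)-\lambda)\tilde f = e^{-ix}(\omega_0(p)-\lambda_0)(e^{ix}\tilde f)$; since $\omega_0(p)$ sends $H^2$ into $H^1$, the right-hand side lies in $H^1(\RR) \subset C(\RR)$, making the quotient continuous.

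The decay $\tilde V(x) = O(1/|x|)$ then follows immediately from (Q.1), (Q.2) and Lemma \ref{lubound}: (Q.2) bounds the numerator of the first term by $O(\braket{x}^{-2})$, and dividing by $\tilde f(x) \geq C\braket{x}^{-1}$ gives $O(\braket{x}^{-1})$; similarly, $|\tilde h(x)^2/\tilde f(x)| \leq c_2^2 C^{-1} \braket{x}^{-1}$ controls the second term. Finally, $\tilde u \in D(\tilde H) = D(\omega(p))$ since $\tilde u$ inherits $D(p^3)$-regularity from $\tilde h$ and $\tilde V$ is bounded, so the eigenvalue equation $\tilde H\tilde u = \lambda \tilde u$ holds by construction.

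The principal obstacle is regularity bookkeeping: unlike the $C^\infty$ setting of Lemma \ref{alt}, one must verify at each step that the $D(p^3)$-budget suffices, in particular that $(\omega(p+1)-\lambda)\tilde f$ is genuinely a continuous function so that the pointwise hypothesis (Q.2) is meaningful. This same tightness will constrain the subsequent proofs of (Q.1) and (Q.2), which must work with only three controlled derivatives of $\tilde h$ (via Lemma \ref{lubound}) rather than the full smoothness exploited in Propositions \ref{lb of f} and \ref{f decay}.
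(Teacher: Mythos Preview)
Your proposal is correct and follows essentially the same route as the paper: derive the analogue of Lemma~\ref{keylemma} for the tilde objects, cancel the $\sin x$ in $\tilde h'$ to obtain the expression \eqref{vmt2}, then read off continuity from (Q.1) and decay from (Q.1)--(Q.2) together with Lemma~\ref{lubound}. You are somewhat more explicit than the paper about why $(\omega(p+1)-\lambda)\tilde f$ is continuous (via $\tilde h\in D(p^3)\Rightarrow \tilde f\in H^2 \Rightarrow (\omega_+ -\lambda_0)\tilde f\in H^1\hookrightarrow C(\RR)$), whereas the paper simply attributes this to Lemma~\ref{lubound}; also note that your $\mathrm{sign}(x)\sin x$ is just the paper's $\sin|x|$.
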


\begin{proof}
By Lemma \ref{lubound}, $\tilde f$ and $(\ome(p+1)-\lambda)\tilde f$ are continuous.
By a similar argument as in the proof of Lemma \ref{keylemma}, we have
 \begin{align}
   \tilde V(x)
   = -\frac{(\ome(p+1)-\lambda)\tilde f(x)}{\tilde f(x)}
        + \frac{2\tilde h'(x)}{\sin x} \frac{e^{-ix}}{\tilde f(x)}.  \label{vmt1.5}
 \end{align}
Notice that \eqref{eemt} and \eqref{vmt1.5} are equivalent. Since $\tilde h'(x) =
-4\sgn(x) \sin^2 x \, \tilde h(x)^2$ by direct computation, we obtain
\begin{align}
   \tilde V(x)
   = -\frac{(\ome(p+1)-\lambda)\tilde f(x)}{\tilde f(x)}
        - 8 \tilde h(x)^2 \frac{e^{-ix}}{\tilde f(x)}\sin |x|.
   \label{vmt2}
\end{align}
Assumption (Q.1) implies that the denominator of \eqref{vmt2} is strictly positive,
and hence $\tilde V(x)$ is continuous.
Moreover, Lemma \ref{lubound} and assumption (Q.2) yield that $\tilde V=O(\braket{x}^{-1})$.
\end{proof}

To prove (Q.1) in the previous lemma, we need some further preparation.
\begin{lemma}{\label{lbMT}}
 If $m>20$, then for all $x\in\RR$ we have
 \begin{align}
    \ome_0(p)^{-1} \tilde h(x) \geq \frac{c_1}{10m}\braket{x}^{-1}.
 \end{align}
\end{lemma}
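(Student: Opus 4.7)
The plan is to imitate the proof of Lemma \ref{lbomeh} in the massive Neumann--Wigner case, but with the weaker pointwise lower bound $\tilde h(y) \geq c_1 \langle y \rangle^{-1}$ from Lemma \ref{lubound} in place of the $\langle y \rangle^{-2}$ bound on $h$. Combining this with the lower bound on the kernel from Lemma \ref{bome} gives, after writing $\omega_0(p)^{-1}\tilde h$ as a convolution against $\widehat{\omega_0^{-1}}$,
\begin{align*}
\omega_0(p)^{-1}\tilde h(x) \geq \frac{c_1(1-e^{-1})}{\pi} \int_\RR \frac{e^{-m|x-y|}}{\sqrt{1+2m|x-y|}}\langle y \rangle^{-1}\,dy.
\end{align*}

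Next I would translate $y \mapsto x+s$ and factor out $\langle x \rangle^{-1}$ using the elementary inequality
\begin{align*}
\inf_{x\in\RR}\frac{\langle x \rangle}{\langle x+s \rangle} \geq \frac{1}{1+|s|},
\end{align*}
which is the square-root analogue of the bound used inside the proof of Lemma \ref{lbomeh}. This reduces the estimate to lower-bounding
\begin{align*}
I(m) := \int_\RR \frac{e^{-m|s|}}{\sqrt{1+2m|s|}\,(1+|s|)}\,ds = \frac{2}{m} \int_0^\infty \frac{e^{-u}}{\sqrt{1+2u}\,(1+u/m)}\,du,
\end{align*}
where the second equality comes from substituting $u = m|s|$.

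For $m > 20$, monotonicity gives $1/(1+u/m) \geq 1/(1+u/20)$, so $I(m) \geq (2/m) J$ with $J := \int_0^\infty e^{-u}/(\sqrt{1+2u}(1+u/20))\,du$. A direct numerical evaluation of $J$ (split, say, over $[0,1]$ and $[1,\infty)$) shows that $J \geq \pi/(20(1-e^{-1}))$, whence $I(m) \geq \pi/(10m(1-e^{-1}))$. Plugging back yields $\omega_0(p)^{-1}\tilde h(x) \geq (c_1/(10m))\langle x \rangle^{-1}$, as claimed.

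The only non-routine point is tracking all numerical constants — in particular verifying that the mass threshold $m > 20$ and the universal constants in Lemmas \ref{bome} and \ref{lubound} actually combine to yield the displayed constant $c_1/(10m)$; the rest is convolution manipulations parallel to Lemma \ref{lbomeh}.
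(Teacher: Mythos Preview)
Your proposal is correct and follows essentially the same route as the paper: convolution representation via Lemma~\ref{bome}, the lower bound $\tilde h \geq c_1\langle\,\cdot\,\rangle^{-1}$ from Lemma~\ref{lubound}, the factor-out step using $\inf_x \langle x\rangle/\langle x+s\rangle \geq 1/(1+|s|)$, the substitution $u=m|s|$, and the use of $m>20$ to replace $1/(1+u/m)$ by $1/(1+u/20)$ before a final numerical bound on the resulting one-dimensional integral. The paper's proof is terser but structurally identical, and the numerical inequality you isolate, $J \geq \pi/(20(1-e^{-1}))$, is precisely what the paper's last displayed inequality encodes.
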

\begin{proof}
The proof can be obtained along the argument used in the proof of Lemma \ref{lbomeh}. By Lemma \ref{lubound} we
have
\begin{align*}
  \ome_0(p)^{-1}\tilde h(x)
   &\geq \frac{1-e^{-1}}{\pi} c_1\braket{x}
         \int_\RR \frac{e^{-m|y|}}{\sqrt{1+2m|y|}}
         \left( \inf_{x\in\RR} \frac{1+x^2}{1+(x+y)^2} \right)^{1/2} dy \\
   &\geq 2\frac{1-e^{-1}}{\pi} c_1\braket{x}
          \int_0^\infty \frac{e^{-s}}{\sqrt{1+2s}} \frac{1}{1+s/20} \frac{ds}{m} \geq \frac{c_1}{10m}\braket{x}^{-1}.
\end{align*}
\end{proof}
\begin{lemma}{\label{ubMT}}
 For all $m>0$ and $x\in\RR$,
 \begin{align*}
  \ome_0(p)^{-1} \tilde h(x)
  \leq c_2 \left( \frac{2}{m} + \frac{1}{m^2} \right) \braket{x}^{-1}.
 \end{align*}
\end{lemma}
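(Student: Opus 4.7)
The plan is to follow the scheme of Lemma \ref{ubomeh}, adapted for the slower decay $\tilde h(y) = O(\braket{y}^{-1})$ instead of $h(y) = O(\braket{y}^{-2})$. I would start from the convolution representation
\[
\ome_0(p)^{-1}\tilde h(x) = \frac{1}{\sqrt{2\pi}}\int_\RR \widehat{\ome_0^{-1}}(x-y)\,\tilde h(y)\,dy,
\]
then apply the upper bound $\widehat{\ome_0^{-1}}(z)\leq e^{-m|z|}/\sqrt{m|z|}$ from Lemma \ref{bome} together with $\tilde h(y)\leq c_2 \braket{y}^{-1}$ from Lemma \ref{lubound}. After the substitution $z = y-x$ the remaining task is to control
\[
\int_\RR \frac{e^{-m|z|}}{\sqrt{m|z|}}\,\braket{x+z}^{-1}\,dz
\]
by a constant depending only on $m$ times $\braket{x}^{-1}$.

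To factor out $\braket{x}^{-1}$ I would write $\braket{x+z}^{-1} = \braket{x}^{-1}\cdot(\braket{x}/\braket{x+z})$ and pull the ratio out via its supremum in $x$, mimicking the supremum bound used in Lemmas \ref{ubomeh} and \ref{ub p2sqh}. The difference is that we are comparing $\braket{x}$ rather than $\braket{x}^2$, so one only needs the Peetre-type estimate $1+x^2 \leq 2\,(1+(x+z)^2)(1+z^2)$, giving
\[
\sup_{x\in\RR}\frac{\braket{x}}{\braket{x+z}} \leq \sqrt{2}\,\braket{z} \leq \sqrt{2}\,(1+|z|).
\]
This is the main technical point of the proof: choosing a bound of the wrong order in $|z|$ (for instance keeping $\braket{z}^2$ as in the proofs of Lemmas \ref{ubomeh} and \ref{ub p2sqh}) would produce extra powers of $1/m$ and spoil the explicit $2/m + 1/m^2$ form required by the conclusion.

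The remaining integral reduces to two gamma integrals after the substitution $s = m|z|$:
\[
\int_\RR \frac{e^{-m|z|}}{\sqrt{m|z|}}(1+|z|)\,dz
= \frac{2}{m}\int_0^\infty s^{-1/2} e^{-s}\,ds + \frac{2}{m^2}\int_0^\infty s^{1/2} e^{-s}\,ds
= \frac{2\sqrt{\pi}}{m} + \frac{\sqrt{\pi}}{m^2}.
\]
Multiplying by the overall prefactor $c_2\sqrt{2}/\sqrt{2\pi} = c_2/\sqrt{\pi}$ yields exactly $c_2\bigl(2/m + 1/m^2\bigr)\braket{x}^{-1}$, closing the proof without any further restriction on $m$.
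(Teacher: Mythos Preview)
Your proof is correct and is essentially the same as the paper's. The paper likewise uses the convolution representation, the kernel bound from Lemma~\ref{bome}, the estimate $\tilde h\leq c_2\braket{\cdot}^{-1}$, and the Peetre-type bound $\sup_x \braket{x}/\braket{x+y}\leq \sqrt{2y^2+2}$; the only cosmetic difference is that the paper keeps $\sqrt{1+y^2}$ through the substitution $s=m|y|$ and then uses $\sqrt{1+s^2/m^2}\leq 1+s/m$, which is the same inequality you applied one step earlier.
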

\begin{proof}
Similarly as above, by Lemma \ref{lubound} we have
\begin{align*}
  \ome_0(p)^{-1}\tilde h(x)
  &\leq \frac{c_2}{\sqrt{2\pi}}\braket{x}^{-1} \int_\RR \frac{e^{-m|y|}}{\sqrt{m|y|}}\sqrt{2y^2+2}dy
  = \frac{2c_2}{\sqrt{\pi}}\braket{x}^{-1} \int_0^\infty
     \frac{e^{-s}}{\sqrt{s}} \sqrt{s^2/m^2+1}\frac{ds}{m} \\
  &\leq \frac{2c_2}{\sqrt{\pi}m}\braket{x}^{-1} \int_0^\infty \frac{e^{-s}}{\sqrt{s}} (s/m+1)ds
  = c_2 \left( \frac{2}{m} + \frac{1}{m^2} \right) \braket{x}^{-1}.
 \end{align*}
\end{proof}
\begin{lemma}{\label{ubppMT}}
For $m>0$, the estimate
 \begin{align*}
 | \ome_0(p)^{-1}p^2 \tilde h(x)|
  \leq \frac{c_3\sqrt{2}}{m}\left(2+\frac{3}{4m^2}\right)\braket{x}^{-2}
 \end{align*}
 holds.
\end{lemma}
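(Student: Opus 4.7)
The argument will closely parallel the proofs of Lemmas \ref{ubMT} and \ref{ub p2sqh}. Since $\tilde h \in D(p^3)$ by Lemma \ref{lubound}, we have the pointwise identity $p^2 \tilde h(x) = -\tilde h''(x)$ together with the decay estimate $|\tilde h''(y)| \leq c_3 \braket{y}^{-2}$. Combined with the convolution representation for $\omega_0(p)^{-1}$ and the upper kernel bound $\widehat{\omega_0^{-1}}(z) \leq e^{-m|z|}/\sqrt{m|z|}$ from Lemma \ref{bome}, we obtain
\[
\bigl|\omega_0(p)^{-1} p^2 \tilde h(x)\bigr|
\;\leq\; \frac{c_3}{\sqrt{2\pi}} \int_\RR \frac{e^{-m|x-y|}}{\sqrt{m|x-y|}}\,\frac{1}{1+y^2}\, dy.
\]

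The next step is to extract the factor $\braket{x}^{-2}$. After the translation $y \mapsto x+y$, I would bound the resulting expression using
\[
\sup_{x\in\RR} \frac{1+x^2}{1+(x+y)^2} \;\leq\; 2 + y^2,
\]
which reduces the right-hand side to $\frac{c_3}{\sqrt{2\pi}}\braket{x}^{-2} \int_\RR \frac{e^{-m|y|}}{\sqrt{m|y|}}(2+y^2)\,dy$. The substitution $s = m|y|$ then yields
\[
\int_\RR \frac{e^{-m|y|}}{\sqrt{m|y|}}(2+y^2)\,dy
= \frac{2}{m} \int_0^\infty \frac{e^{-s}}{\sqrt{s}}\Big(2 + \frac{s^2}{m^2}\Big)ds
= \frac{2\sqrt{\pi}}{m}\Big(2 + \frac{3}{4m^2}\Big),
\]
using $\Gamma(\tfrac12)=\sqrt{\pi}$ and $\Gamma(\tfrac52)=\tfrac{3\sqrt{\pi}}{4}$. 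Collecting the constants $\frac{c_3}{\sqrt{2\pi}} \cdot \frac{2\sqrt{\pi}}{m} = \frac{c_3\sqrt{2}}{m}$ gives exactly the bound in the statement.

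The only non-routine ingredient is the sharp supremum bound $\sup_x \frac{1+x^2}{1+(x+y)^2} \leq 2+y^2$; this is the main (albeit minor) obstacle, since the looser choices $1+y^2$ or $2(1+y^2)$ would lose the correct numerical constant in the second term. An elementary calculus argument shows the exact supremum equals $\frac{y^2+2+|y|\sqrt{y^2+4}}{2}$, and the inequality $|y|\sqrt{y^2+4} \leq y^2+2$ (equivalent to $y^4 + 4y^2 \leq (y^2+2)^2$) then delivers the required bound. Everything else is a routine pointwise convolution estimate and a gamma-function computation.
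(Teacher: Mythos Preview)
Your proposal is correct and follows essentially the same route as the paper: bound $|p^2\tilde h| = |\tilde h''| \leq c_3\braket{\cdot}^{-2}$ via Lemma~\ref{lubound}, apply the positive kernel bound from Lemma~\ref{bome}, factor out $\braket{x}^{-2}$ using the supremum of $(1+x^2)/(1+(x+y)^2)$, and evaluate the resulting gamma integrals. The only difference is that you make explicit the sharp bound $\sup_x \frac{1+x^2}{1+(x+y)^2}\leq 2+y^2$ (via the exact supremum and $|y|\sqrt{y^2+4}\leq y^2+2$), whereas the paper leaves this step implicit; this is indeed precisely the bound needed to land on the constant $\frac{3}{4m^2}$ rather than $\frac{3}{2m^2}$.
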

\begin{proof}
 As in the proof of Lemma \ref{ub p2sqh}, we have
 \begin{align*}
   |\ome_0(p)^{-1} p^2 \tilde h(x)|
  & \leq c_3 \ome_0(p)^{-1}\braket{x}^{-2} \\
  & \leq c_3 \frac{1}{\sqrt{2\pi}} \braket{x}^{-2} \int_\RR \frac{e^{-m|y|}}{\sqrt{m|y|}} \left(\sup_{x\in\RR} \frac{1+x^2}{1+(x+y)^2} \right) \\
  & = \frac{c_3\sqrt{2}}{m}\left(2+\frac{3}{4m^2}\right)\braket{x}^{-2}.
 \end{align*}
\end{proof}

\begin{lemma}{\label{lbMT2}}
 If $m>34$, then
 \begin{align}
   \ome_0(p)\tilde h(x) \geq \braket{x}^{-1}.
 \end{align}
\end{lemma}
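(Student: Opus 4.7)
The plan is to mirror the proof of Lemma \ref{lb of ome h} in the Moses--Tuan setting. Writing
\[
\ome_0(p)\til{h}(x) \;=\; \frac{m^2}{\sqrt{p^2+m^2}}\til{h}(x) \;+\; \frac{p^2}{\sqrt{p^2+m^2}}\til{h}(x),
\]
I would bound the first term from below using Lemma \ref{lbMT}, giving
\[
\frac{m^2}{\sqrt{p^2+m^2}}\til{h}(x) \;\geq\; \frac{c_1 m}{10}\,\braket{x}^{-1},
\]
and bound the modulus of the second term from above using Lemma \ref{ubppMT}, giving
\[
\left|\frac{p^2}{\sqrt{p^2+m^2}}\til{h}(x)\right| \;\leq\; \frac{c_3\sqrt{2}}{m}\Big(2+\frac{3}{4m^2}\Big)\braket{x}^{-2}.
\]

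Since $\braket{x}^{-2}\leq \braket{x}^{-1}$, these combine to
\[
\ome_0(p)\til{h}(x) \;\geq\; \left(\frac{c_1 m}{10} - \frac{c_3\sqrt{2}}{m}\Big(2+\frac{3}{4m^2}\Big)\right)\braket{x}^{-1}.
\]
The proof is then finished by checking that the bracketed coefficient is $\geq 1$ whenever $m\geq 34$, which amounts to verifying a single inequality in $m$ with the explicit numerical constants $c_1=0.26$ and $c_3=2.2$ from Lemma \ref{lubound}.

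The main obstacle is not structural but numerical: the threshold $m>34$ is tight against the constants produced in Lemmas \ref{lbMT} and \ref{ubppMT}, and getting it to work forces one to be careful about the constants in the auxiliary estimates (in particular the $\frac{c_1}{10m}$ lower bound, which comes from the crude step $(1+s/20)^{-1}\leq 1$ and the $\inf$-type bound on $(1+x^2)/(1+(x+y)^2)$). If those constants are taken as given, the final check is a one-line arithmetic calculation; if they need to be tightened, I would revisit the proofs of Lemmas \ref{lbMT} and \ref{ubppMT} to squeeze out the necessary margin, but would not otherwise alter the overall splitting strategy.
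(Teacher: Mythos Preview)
Your proposal is correct and matches the paper's proof essentially line for line: the paper also splits $\ome_0(p)\til h = m^2\ome_0(p)^{-1}\til h + \ome_0(p)^{-1}p^2\til h$, applies Lemmas \ref{lbMT} and \ref{ubppMT}, absorbs $\braket{x}^{-2}\leq\braket{x}^{-1}$, and finishes with the numerical check that $\tfrac{c_1 m}{10} - \tfrac{c_3\sqrt{2}}{m}\big(2+\tfrac{3}{4m^2}\big)>1$ for $m>34$. Your caveat about the tightness of the constants is apt, but the strategy is exactly the intended one.
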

\begin{proof}
By Lemmas \ref{lbMT} and \ref{ubppMT} we have
\begin{align*}
 \ome_0 \tilde h(x)
 &= m^2 \ome_0(p)^{-1}\tilde h(x) + \ome_0(p)^{-1}p^2 \tilde h(x) \\
 & \geq m^2 \ome_0(p)^{-1} \tilde h(x) - \ome_0(p)^{-1} | p^2 \tilde h(x)| \\
 & \geq \left( \frac{1}{10} m c_1 - \frac{\sqrt{2}}{m}c_3\left(2+\frac{3}{4m^2}\right)\right)\braket{x}^{-1}.
\end{align*}
Since $m>34$, we have that $  \frac{1}{10} m c_1 - \frac{\sqrt{2}}{m}c_3\left(2+\frac{3}{4m^2}\right) > 1$, which
completes the proof.
\end{proof}
\noindent
Hence follows (Q.1) in Lemma \eqref{altMT}.
\begin{proposition}{\label{lf of fMT}}
 If $m>34$, we have for all $x\in\RR$ that
 \begin{align*}
      \tilde f(x) \geq 2 \braket{x}^{-1}.
 \end{align*}
\end{proposition}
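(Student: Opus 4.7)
The plan is to mimic the proof of Proposition \ref{lb of f} with the roles of $h$ and $\ome_0 h \geq \braket{x}^{-2}$ replaced by $\tilde h$ and $\ome_0\tilde h \geq \braket{x}^{-1}$ (Lemma \ref{lbMT2}). The key structural observation carries over: writing
\[
T(k) := \sqrt{(k+1)^2+m^2} + \sqrt{(k-1)^2+m^2} - 2\sqrt{k^2+m^2},
\]
we have the decomposition
\[
\tilde f(x) = \bigl(\ome_+(p) + \ome_-(p)\bigr)\tilde h(x) = T(p)\tilde h(x) + 2\ome_0(p)\tilde h(x).
\]
The argument already given in the proof of Proposition \ref{lb of f} shows $\widehat T(x) = 2\widehat{\ome_0}(x)(\cos x - 1) \geq 0$, since $\widehat{\ome_0}(x) \leq 0$ by the computation in \eqref{K1}. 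Hence $T(p)$ is positivity preserving on the class of non-negative functions. This part carries over verbatim and does not need to be redone.

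Next I would note that $\tilde h(x) = 1/(1+g(|x|)) \geq 0$ everywhere, since $g(y) = 2y - \sin 2y \geq 0$ for $y \geq 0$. Therefore $T(p)\tilde h(x) \geq 0$ pointwise. Combining this with Lemma \ref{lbMT2}, which provides the lower bound $\ome_0(p)\tilde h(x) \geq \braket{x}^{-1}$ under the hypothesis $m > 34$, we obtain
\[
\tilde f(x) \;=\; T(p)\tilde h(x) + 2\,\ome_0(p)\tilde h(x) \;\geq\; 2\,\ome_0(p)\tilde h(x) \;\geq\; 2\,\braket{x}^{-1},
\]
which is exactly the claim.

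There is essentially no new obstacle here: all the hard estimates (the lower bound for $\ome_0(p)\tilde h$, the sign of $\widehat{\ome_0}$, and the regularity of $\tilde h$) have been prepared in Lemmas \ref{lubound}, \ref{lbMT}, \ref{ubppMT} and \ref{lbMT2}. The only thing to verify is that the positivity-preserving argument for $T(p)$ is unaffected by the replacement of $h$ with $\tilde h$, but this depends only on the symbol $T$ and the non-negativity of $\tilde h$, both of which are in hand. The constant $m > 34$ is forced by Lemma \ref{lbMT2} (more precisely by the balance $\tfrac{1}{10}mc_1 - \tfrac{\sqrt 2}{m}c_3(2 + \tfrac{3}{4m^2}) > 1$), so no additional threshold on $m$ arises at this final step.
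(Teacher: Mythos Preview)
Your proposal is correct and is precisely the argument the paper intends: the paper's own proof reads simply ``Similar to the proof of Proposition \ref{lb of f},'' and you have faithfully spelled out those details, using the same decomposition $\tilde f = T(p)\tilde h + 2\ome_0(p)\tilde h$, the non-negativity of $\widehat{T}$ from \eqref{K1}, and the lower bound of Lemma \ref{lbMT2}.
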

\begin{proof}
 Similar to the proof of Proposition \ref{lb of f}.
\end{proof}

In order to prove (Q.2) we will use a limiting argument. Let $j \in C_0^\infty(\RR)$ be a function such that $j(x)\geq 0$
and $\int_\RR j(x)dx = 1$, and set $j_n(x)=nj(nx)$. Write
\begin{align*}
  \tilde h_{n}(x) & := (j_n*\tilde h)(x), \\
  \tilde f_{n}(x) & := (\ome_+(p) + \ome_-(p)) \tilde h_{n}(x)
\end{align*}
Since $\tilde h$ and $\tilde f$ are continuous and decreasing, the sequences $\tilde h_n$ and $\tilde f_{n}$ converge
uniformly to $\tilde h$ and $\tilde f$, respectively, as $n\to\infty$. For the same reason, furthermore we have
\begin{align*}
   \lim_{n\to\infty} D \tilde f_{n}(x) = D \tilde f(x),
\end{align*}
for $D=\ome_0,\ome_+(p), \ome_-(p)$.
Now we can show (Q.2).
\begin{proposition}
 For $m>0$,
 \begin{align}
   (\ome(p+1)-\lambda)\tilde f(x)= O(\braket{x}^{-2}), \qquad x\in\RR,
 \end{align}
 holds.
\end{proposition}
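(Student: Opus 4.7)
The plan is to adapt the iteration from the proof of Proposition \ref{f decay} to $\tilde f$, using the mollifications $\tilde h_n = j_n * \tilde h$ and $\tilde f_n = (\ome_+(p)+\ome_-(p))\tilde h_n$ already introduced above. Since $\ome(p+1)-\lambda = \ome_+(p) - \lambda_0$, the target is the pointwise bound $(\ome_+(p)-\lambda_0)\tilde f(x) = O(\braket{x}^{-2})$.

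Because $\tilde h_n \in C^\infty \cap \bigcap_{k\geq 1} D(p^k)$, the double application of Lemma \ref{iter} used in the proof of Proposition \ref{f decay} is legitimate for $\tilde f_n$ and yields
\begin{align*}
|(\ome_+(p)-\lambda_0)\tilde f_n(x)|
 &\leq \ome_0(p)^{-2}\bigl(|(p^2+2p)^2\tilde h_n(x)| + |(p^4-4p^2)\tilde h_n(x)|\bigr) \\
 &\quad + 2\lambda_0\,\ome_0(p)^{-1}|(p^2+2p)\tilde h_n(x)|.
\end{align*}
All terms containing $p^k \tilde h_n$ with $k \leq 3$ admit bounds of order $\braket{x}^{-2}$ uniform in $n$: Lemma \ref{lubound} yields $|\tilde h^{(k)}(x)|\leq c_3\braket{x}^{-2}$ for $k=1,2,3$, convolution with the probability density $j_n$ preserves this decay, and a subsequent application of the kernel estimates used in Lemmas \ref{ubomeh}--\ref{ub p2sqh} for $\ome_0^{-1}$, or of the explicit kernel $G_m(x)=\tfrac{1}{2m}e^{-m|x|}$ for $\ome_0^{-2}$, again produces an $O(\braket{x}^{-2})$ bound.

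The main obstacle is the term $\ome_0(p)^{-2}\,p^4 \tilde h_n$, since $\tilde h \notin D(p^4)$. A direct computation based on $\tilde h'(x)=-4\sgn(x)\sin^2 x\,\tilde h(x)^2$ shows that $\tilde h \in C^2(\RR)$ but $\tilde h'''$ has a finite jump at the origin; distributionally
\[
\tilde h^{(4)} = \tilde h^{(4)}_{\rm reg} + \alpha\,\delta_0,
\]
with $|\tilde h^{(4)}_{\rm reg}(x)|\leq C\braket{x}^{-2}$ and $\alpha\in\RR$ the jump constant. Hence
\[
\ome_0(p)^{-2}\,p^4 \tilde h_n(x) = (G_m * j_n * \tilde h^{(4)}_{\rm reg})(x) + \alpha\,(G_m * j_n)(x),
\]
where the first summand is $O(\braket{x}^{-2})$ uniformly in $n$ by the same convolution argument as above, while the second is dominated by $|\alpha|/(2m)$ times a smoothed exponential and therefore decays faster than any polynomial in $|x|$, uniformly in $n$ for $n$ sufficiently large. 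Collecting the bounds gives $|(\ome_+(p)-\lambda_0)\tilde f_n(x)| \leq C\braket{x}^{-2}$ for all large $n$ and all $x$.

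Finally, the pointwise convergence $\ome_+(p)\tilde f_n(x) \to \ome_+(p)\tilde f(x)$ stated in the text immediately before the proposition, together with the uniform convergence $\tilde f_n \to \tilde f$, lets me pass to the limit $n\to\infty$ on the left, yielding $|(\ome(p+1)-\lambda)\tilde f(x)| \leq C\braket{x}^{-2}$, which is the claim.
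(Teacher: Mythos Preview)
Your proof is correct and follows essentially the same route as the paper: mollify, apply Lemma \ref{iter} twice, isolate the $\delta_0$ singularity coming from the jump of $\tilde h'''$ at the origin, and use that $\ome_0(p)^{-2}$ sends it to the exponentially decaying kernel $e^{-m|x|}/(2m)$. The paper's only organizational differences are that it decomposes $\tilde h^{(3)}$ into explicit pieces $J_1\in D(p)$ and $\sgn(x)J_2$ before differentiating, and passes to the limit via $L^2$-convergence and a.e.\ subsequences rather than via your uniform-in-$n$ pointwise bounds.
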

\begin{proof}
 By the above observations on convergence, we have
 \begin{align*}
     \Big| (\ome_+(p) -\lambda_0) \tilde f(x)\Big|
     = \lim_{n\to\infty}
        \Big| (\ome_+(p) -\lambda_0) \tilde f_{n}(x)\Big|.
 \end{align*}
 Note that $\tilde f_{n} \in D(p^n)$ for every $n\in\NN$.  By using Lemma \ref{iter}, we obtain
 \begin{align*}
    \big| (\ome_+(p) -\lambda_0) \tilde f_{n}(x)\big|
    & \leq \ome_0^{-1} | (p^2+2p) \tilde f_{n}(x) | \\
    & = \ome_0^{-1} | (\ome_+ + \ome_-)(p^2+2p) \tilde h_{n}(x) | \\
    & \leq \ome_0^{-2} ( |(p^2+2p)^2 \tilde h_{n}(x)| + |(p^2-2p)(p^2+2p) \tilde h_{n}(x)| )\\
    & \leq \ome_0^{-2} ( 2|p^4 \tilde h_{n}(x)|
                        +4|p^3 \tilde h_{n}(x)|
                        +8|p^2 \tilde h_{n}(x)| ).
 \end{align*}
From the fact that $\tilde h\in D(p^3)$ it follows that $|p^3\tilde h_{n}|$ converges to $|p^3\tilde h|$ in
$L^2$-norm. Hence $\ome_0^{-2}|p^3 \tilde h_{n}|$ goes to $\ome_0^{-2}|p^3\tilde h|$ in $L^2$ sense. By taking a
subsequence $n_j$,
\begin{align*}
\ome_0^{-2}|p^3 \tilde h_{n_j}|(x) \to \ome_0^{-2}|p^3\tilde h|(x) \quad \text{for a.e. } x\in\RR,
\end{align*}
as $j\to\infty$. Similarly, $\ome_0^{-2}|p^2 \tilde h_{n_j}|(x)$ goes to $\ome_0^{-2}|p^2\tilde h|(x)$ for a.e.
$x\in\RR$. Next we consider the term
\begin{align*}
p^4 \tilde h_{n} = (j_n * \tilde h^{(3)})'.
\end{align*}
In the remainder of the proof we denote $g(|x|)$ by $\til{g}(x)$. Writing
\begin{align*}
   \tilde h^{(3)}
   = -\til{g}^{(3)}\tilde h^2 + J_1,
\end{align*}
we obtain $J_1=-4\til{g}'' \tilde h^2\tilde h'' - 2\til{g}'((\tilde h')^2 + \tilde h \tilde h'') \in D(p)$.
Thus we have
\begin{align*}
 \ome_0^{-2}| j_{n_j} * J_1'|(x) \to \ome_0^{-2} | J_1' |(x), \qquad \text{a.e. } x\in\RR,
\end{align*}
as $j\to\infty$. Hence
\begin{align}
  \Big| (\ome_+(p) -\lambda_0) \tilde f(x)\Big|
  & \leq 4\ome_0^{-2} |p^3 \tilde h(x)| + 8 \ome_0^{-2} |p^2 \tilde h(x)|
         +2\ome_0^{-2} |J_1'(x)|   \label{451} \\
  &\quad  + 2 \limsup_{j\to\infty} \ome_0^{-2}  |(j_{n_j}*(\til{g}^{(3)}\tilde h^2))'|(x), \label{452}
\end{align}
for a.e.~$x$. By differentiation in distributional sense, we get $\til{g}^{(3)}(x) = 8\cos(2x)\sgn(x)$. Write $J_2:=
8\cos(2x)\tilde h^2$. Then we have
\begin{align*}
   (j_n*(\til{g}^{(3)}\tilde h^2))'
   = (j_n* \sgn(x)J_2 )'
   = 2  J_2(0) j_n(x) + j_n*(\sgn(x) J_2').
\end{align*}
Again, by noting that $J_2'\in L^2(\RR)$, we get that $|j_{n_j}*(\sgn(x)J_2')(x) |\to |J_2'(x)|$, for a.e.~$x$. Thus
\begin{align*}
 \eqref{452}
 \leq
 4 |J_2(0)| \limsup_{j\to\infty} \ome_0^{-2} j_{n_j}(x)
 + 2\ome_0^{-2} |J_2'(x)|.
\end{align*}
Since $\ome_0^{-2} = (p^2+m^2)^{-1}$, it follows that
\begin{align*}
    \lim_{n\to\infty} \ome_0^{-2} j_n(x)  = \frac{e^{-m|x|}}{2m}.
\end{align*}
Thus, with $ J_3 = 4|p^3 \tilde h(x)| + 8|p^2 \tilde h(x)| + 2|J_1'(x)| + 2|J_2'(x)|$, we have
\begin{align*}
 \text{l.h.s. } \eqref{451} \leq \ome_0^{-2}J_3(x) + \frac{e^{-m|x|}}{2m}.
\end{align*}
By the definition of $J_1,\, J_2$ and $\tilde h$, we have $ J_3(x) \leq \frac{C}{1+x^2}$,
for some $C>0$. Hence, by the same argument as in the proof of Lemma \ref{ubMT}, we obtain that
\begin{align*}
  \ome_0^{-2} J_3(x) = O(\braket{x}^{-2}).
\end{align*}
Clearly, $e^{-m|x|}/2m = O(\braket{x}^{-2})$, and therefore we conclude that $|(\ome_+-\lambda)\tilde f(x)|=O(\braket{x}^{-2})$.
\end{proof}


\subsection{Proof of Theorem \ref{nonrellimit}}

The limit \eqref{lim lambdac} is elementary.
We show that $f_c(x)$ converges to $h(x)$ in the uniform norm of $C^2(\RR)$.
Lemma \ref{prop of h} implies that $|k|^n \widehat{h}(k) \in L^2(\RR)$ for all $n=0,1,\dots$
By this fact and the H\"older inequality we have
\begin{align*}
  \int_\RR |k|^n |\widehat{h}(k)| dk
  &= \int_{-1}^1 |k|^n |\widehat{h}(k)| dk
     + \int_{|k|\geq 1} |k|^{-1} |k|^{n+1} |\widehat{h}(k)| dk \nonumber \\
 & \leq \Big( \int_{-1}^1 |k|^{2n}dk \Big)^{1/2}\norm{\widehat{h}}_{L^2}
        + \Big( \int_{|k|>1} |k|^{-2}dk \Big)^{1/2}\norm{p^{n+1}\widehat{h}}_{L^2}  < \infty.
\end{align*}
Hence $k^n\widehat{h} \in L^1(\RR)$, for all $n\geq 0$. By using Fourier transforms, we have
\begin{align}
 \sup_{x\in\RR}|f_c(x)-h(x)| \leq \norm{\widehat{f}_c-\widehat{h}}_{L^1},  \label{fc-h}
\end{align}
and by the definition of $f_c$ we obtain the bound
\begin{align}
|\widehat{f}_c(k) - \widehat{h}(k) | \nonumber
&= \frac{1}{2m} \left| \left( \sqrt{\frac{1}{c^2}(k+1)^2+m^2} + \sqrt{\frac{1}{c^2}(k-1)^2+m^2} -2m\right) \widehat{h}(k) \right| \\
&\leq  \frac{1}{2m}\frac{2k^2+2}{mc^2}|\widehat{h}(k)|.  \label{x272}
\end{align}
This estimate and $(k^2+1)\widehat{h}\in L^1$ imply that
the right hand side of \eqref{fc-h} goes to zero as $c\to\infty$.
Thus
\begin{align}
  \sup_{x\in\RR} |u_c(x)-h(x)\sin x| \leq \norm{\widehat{f}_c-\widehat{h}}_{L^1} \to 0 \quad \mbox{as $c\to\infty$}.
  \label{lim uch}
\end{align}
Similarly, we can show that $f'_c$ and $f''_c$ are uniformly convergent to $h'$ and $h''$, respectively. Hence $u_c(x)$
converges to $h(x)\sin x$ in the uniform norm of $C^2(\RR)$.

Next we show that
\begin{align}
   \left(\sqrt{-c^2\tfrac{d^2}{dx^2}+m^2c^4}-mc^2\right)u_c(x)  \to -\frac{1}{2m}\frac{d^2}{dx^2} u_\infty(x), \label{diff0}
\end{align}
uniformly as $c\to\infty$. Using Fourier transforms, we get
\begin{align}
 &\hspace{-1cm} \sup_{x\in\RR} \left| \left(\sqrt{-c^2\tfrac{d^2}{dx^2}+m^2c^4}-mc^2\right)u_c(x)
  + \frac{1}{2m}\frac{d^2}{dx^2} u_\infty(x)\right| \label{diff1} \nonumber\\
 &\leq
 \Big\| (\sqrt{c^2k^2 +m^2c^4}-mc^2)\widehat{u}_c - \frac{k^2}{2m} \widehat{u}_\infty \Big\|_{L^1} \nonumber \\
 &\leq
 \left\| \left(\sqrt{c^2k^2 +m^2c^4}-mc^2-\frac{k^2}{2m}\right)\widehat{u}_c \right\|_{L^1}
 + \Big\| \frac{k^2}{2m}(\widehat{u}_c - \widehat{u}_\infty) \Big\|_{L^1}.
\end{align}
The estimate
\begin{align*}
 \Big|\sqrt{c^2k^2 +m^2c^4}-mc^2-\frac{k^2}{2m}\Big| =
 \Big|-\frac{k^4}{2mc^2}\frac{1}{(m+\sqrt{c^{-2}k^2+m^2})^2}\Big| \leq \frac{k^4}{4m^2 c^2}
\end{align*}
gives
\begin{align*}
 \eqref{diff1}
 \leq \frac{1}{4m^2c^2}\norm{k^4 \widehat{u}_c}_{L^1} + \frac{1}{2m}\norm{k^2(\widehat{u}_c-\widehat{u}_\infty)}_{L^1}.
\end{align*}
Note that $\widehat{u}_c(k)=(\widehat{f}_c(k+1)-\widehat{f}_c(k-1))/2i$ holds identically. Thus
\begin{align*}
 k^4|\widehat{u}_c(k)| \leq k^4( |\widehat{f}_c(k+1)| + |\widehat{f}_c(k-1)|).
\end{align*}
From \eqref{x272} we have $|\widehat{f}_c(k)|\leq (1+(k^2+1)/m^2c^2)|\widehat{h}(k)|$. This and $k^6 \widehat{h} \in L^1(\RR)$
imply that
\begin{align*}
  \lim_{c\to\infty} \frac{1}{c^2}\norm{k^4 \widehat{u}_c}_{L^1} = 0.
\end{align*}
Similarly, we can show that $\norm{k^2(\widehat{u}_c-\widehat{u}_\infty)}_{L^1}$ goes to zero as $c\to\infty$.
Hence \eqref{diff0} holds.

By Proposition \ref{lb of f}, $f_c(x)$ is strictly positive and thus $u_c(x)$ has zeroes only at $\pi\NN$.
Therefore by \eqref{lim uch}-\eqref{diff0} we conclude
\begin{align}
  \lim_{c\to\infty} V_c(x) = \frac{1}{2m} - \frac{1}{2m}\frac{-\tfrac{d^2}{dx^2} u_\infty(x)}{u_\infty(x)},
\end{align}
for all $x\in \RR\setminus \pi\NN$.
The proof of Corollary \ref{coro3D} can be done similarly.

\subsection{Proof of Theorems \ref{0energy-1} and \ref{0energy-2}}
\begin{proof}[Proof of Theorem \ref{0energy-1}]
Using  \cite[p11, (7)]{EMOT}, we have that
\begin{align}
 \widehat u_\nu(k) = \frac{2^{1-\nu}}{\Gamma(\nu)} |k|^{\nu-\frac{1}{2}} K_{\nu-\frac{1}{2}}(|k|).
\end{align}
By the integral formula \cite[p61, (9)]{EMOT}, we obtain
\begin{align}
   \sqrt{-d^2/dx^2} \, u_\nu(x)
   &= \widehat{(|k|\hat u_\nu)}(x)
   = \frac{2 \Gamma(\tfrac{1}{2}+\nu)}{\sqrt{\pi} \Gamma (\nu)}
     \; {}_2F_1\left(1,\tfrac{1}{2}+\nu;\tfrac{1}{2};-x^2\right).  \label{eq353}
\end{align}
Thus \eqref{0eveq} follows. Next we show that $V_\nu(x)=O(|x|^{-1})$ whenever $\nu<\frac{1}{2}$. In this case, by Pfaff
transformation \cite[Th. 2.2.5]{AAR} it follows that
\begin{align*}
{}_2F_1\left(1,\tfrac{1}{2}+\nu;\tfrac{1}{2};-x^2\right)
= (1+x^2)^{-\nu-\tfrac{1}{2}} {}_2F_1(\tfrac{1}{2}+\nu,-\tfrac{1}{2};\tfrac{1}{2}; \tfrac{x^2}{1+x^2}).
\end{align*}
From the definition of $V_\nu$ we have
\begin{align*}
     \sqrt{-d^2/dx^2} \, u_\nu(x) = -V_\nu(x)u_\nu(x).
\end{align*}
With a constant $C > 0$ we obtain
\begin{align*}
  \lim_{|x|\to\infty}|x|V_\nu(x) = C \lim_{z\uparrow 1} {}_2F_1(\tfrac{1}{2}+\nu,-\tfrac{1}{2};\tfrac{1}{2}; z).
\end{align*}
Using Gauss's formula \cite[Th. 2.2.2]{AAR}, the limit at the right hand side can be computed to be
\begin{align}
\label{D}
 {}_2F_1(\tfrac{1}{2}+\nu,-\tfrac{1}{2};\tfrac{1}{2}; 1)
 = \frac{ \Gamma(\tfrac{1}{2})\Gamma(\tfrac{1}{2}-\nu) }{ \Gamma(-\nu)},
\end{align}
where we used that $0<\nu<\frac{1}{2}$. Hence $V(x)=O(1/|x|)$. Similarly, if $\frac{1}{2}<\nu<1$, by Pfaff transformation
we have
\begin{align*}
  {}_2F_1\left(1,\tfrac{1}{2}+\nu;\tfrac{1}{2};-x^2\right)
= (1+x^2)^{-1} {}_2F_1(1,-\nu;\tfrac{1}{2};\tfrac{x^2}{1+x^2}).
\end{align*}
Hence,
\begin{align*}
  V_\nu(x) = - \frac{2\Gamma(\tfrac{1}{2}+\nu)}{\sqrt{\pi}\Gamma(\nu)} (1+x^2)^{-1+\nu}
            {}_2F_1(1,-\nu;\tfrac{1}{2};\tfrac{x^2}{1+x^2}),
\end{align*}
which is of order $O(1/|x|^{2-2\nu})$, and (3) follows.

For $\nu=\frac{1}{2}$ we have $\widehat{u}_{1/2}(k) = (2/\pi)^{1/2} K_0(|k|).$ Hence,
\begin{align}
\label{eqq355}
   \sqrt{-d^2/dx^2} \, {u}_{1/2}(x)
   = \widehat{(|k|\widehat {u}_{1/2}(k))} (x)
   = \frac{2}{\pi} \frac{d}{dx} \int_0^\infty K_0(k)\sin kx dk.
\end{align}
This integral can be computed explicitly \cite[p93, (51)]{EMOT}, and we obtain
\begin{align}
 \eqref{eqq355}
 = \frac{2}{\pi} \bigg( \frac{1}{x^2+1}-\frac{|x| \sinh^{-1}|x|}{\left(x^2+1\right)^{3/2}} \bigg).
 \label{eqq356}
\end{align}
It is straightforward to show that \eqref{eqq355} is of order $O(\log|x|/x)$.
\end{proof}

\begin{proof}[Proof of Theorem \ref{0energy-2}]
For $\nu=1$ we derive the equation
\begin{align}
 \sqrt{-d^2/dx^2} \, v_1   = \frac{2}{(1+x^2)^2}  =\til{V}_1(x) v_1.
\end{align}
Thus
the eigenvalue equation and part line 2 in (\ref{decs}) hold. Note that $v_\nu=(2-2\nu)^{-1}(d/dx)u_\nu$, whenever $\nu\neq 1$.
By \eqref{eq353} we have
\begin{align}
 \sqrt{-d^2/dx^2}\, v_\nu
  &= \frac{\Gamma(\tfrac{1}{2}+\nu)}{(1-\nu)\sqrt{\pi} \Gamma (\nu)}
     \frac{d}{dx} \, {}_2F_1\left(1,\tfrac{1}{2}+\nu;\tfrac{1}{2};-x^2\right) \notag \\
  &= \frac{2(1-2\nu)}{(1-\nu)\sqrt{\pi}} \frac{\Gamma(\nu-\tfrac{1}{2})}{\Gamma(\nu-1)}
     x ~ {}_2F_1\left(2,\tfrac{1}{2}+\nu;\tfrac{3}{2};-x^2\right) \nonumber \\
  &= \til{V}_\nu(x) v_\nu.   \label{eq359}
\end{align}
Thus
the eigenvalue equation follows. Next we show lines 1, 3 and 4 in (\ref{decs}). By Pfaff transformation we obtain
\begin{align*}
   {}_2F_1\left(2,\tfrac{1}{2}+\nu;\tfrac{3}{2};-x^2\right)
   &= (1+x^2)^{-2} \, {}_2F_1\left(2,1-\nu;\tfrac{3}{2};\tfrac{x^2}{1+x^2} \right) \nonumber \\
   &= (1+x^2)^{-(\frac{1}{2}+\nu)} \, {}_2F_1\left( \tfrac{1}{2}+\nu,-\tfrac{1}{2};\tfrac{3}{2};\tfrac{x^2}{1+x^2} \right).
\end{align*}
By another use of the Gauss formula we see that the limits $\lim_{z\uparrow 1} \, {}_2F_1(2,1-\nu;3/2;z )$
and $\lim_{z\uparrow1} {}_2F_1( 1/2+\nu,-1/2;3/2;z)$ are finite whenever $\nu>\frac{3}{2}$ and $\nu<\frac{3}{2}$, respectively.
Hence the expressions in lines 1 and 4 hold. Consider now the case $\nu=\frac{3}{2}$.
Making use of \eqref{eqq355}, we have
\begin{align*}
 \sqrt{-d^2/dx^2} \, u_{3/2}(x)
  = \frac{2}{\pi} \bigg( \frac{1}{x^2+1}-\frac{x \sinh^{-1}x}{\left(x^2+1\right)^{3/2}} \bigg),
\end{align*}
and thus
\begin{align*}
 \sqrt{-d^2/dx^2}\, v_{3/2} = -\frac{2}{\pi} \frac{d}{dx}\bigg( \frac{1}{x^2+1}-\frac{x\sinh^{-1}x}{\left(x^2+1\right)^{3/2}} \bigg).
\end{align*}
Combining this and \eqref{eq359}, we obtain
\begin{align}
 \til{V}_{3/2}(x)
 &= \frac{1}{v_{3/2}(x)}\sqrt{-d^2/dx^2} \, v_{3/2}(x) \nonumber \\
 &= -\frac{2}{\pi} \frac{(1+x^2)^{3/2}}{x}  \frac{d}{dx}\bigg( \frac{1}{x^2+1}-\frac{x\sinh^{-1}x}{\left(x^2+1\right)^{3/2}} \bigg).
 \label{eq365}
\end{align}
It is then direct to show that \eqref{eq365} is of order $O(\log|x|/x)$ as $|x|\to\infty$.
\end{proof}


\section*{Acknowledgments}
IS thanks Professor A. Arai for useful comments. IS's work was supported by JSPS KAKENHI Grant Number 16K17612.
JL thanks IHES, Bures-sur-Yvette, for a visiting fellowship, where part of this paper has been written.

\end{document}